\newtheorem{lemat}{Lemma}
\newtheorem{tw}{Theorem}
\newtheorem{definition}{Definition}
\newtheorem{corollary}{Corollary}
\newtheorem{property}{Property}
\title{Analysis of Multiple Overlapping Paths algorithms for Secure Key Exchange in Large-Scale Quantum Networks
}
\author{Mateusz St\k{e}pniak, Jakub Mielczarek\\
Institute of Theoretical Physics, Jagiellonian University, \\
{\L}ojasiewicza 11, 30-348 Cracow, Poland}
\date{\today}
\begin{document}

\maketitle

\begin{abstract} 
Quantum networks open the way to an unprecedented level of communication security. 
However, due to physical limitations on the distances of quantum links, 
current implementations of quantum networks are unavoidably equipped with 
trusted nodes. As a consequence, the quantum key distribution can be performed only 
on the links. Due to this, some new authentication and key exchange schemes 
must be considered to fully benefit from the unconditional security of links. 
One such approach uses Multiple Non-Overlapping Paths (MNOPs) for key exchange 
to mitigate the risk of an attack on a trusted node. The scope of the article 
is to perform a security analysis of this scheme for the case of both 
uncorrelated attacks and correlated attacks with finite resources. Furthermore, 
our analysis is extended to the case of Multiple Overlapping Paths (MOPs). 
We prove that introducing overlapping paths allows one to increase the security 
of the protocol, compared to the non-overlapping case with the same number 
of additional links added. This result may find application in optimising 
architectures of large-scale (hybrid) quantum networks. 
\end{abstract}

\section{Introduction}

Public-key cryptography discovered in the 1970s of the last century provided 
a long-sought solution to the problem of secret key exchange \cite{DiffieHellman}. 
The ingenious breakthrough allowed for the promotion of ideas such as secure 
communication on the Internet. The security of public-key cryptography is based 
on the high computational complexity of the problem used, such as factorisation 
of composite numbers. 

When public-key cryptography has already been implemented on a global scale, it has 
been realised that the complexity of the problems used can be reduced if 
quantum computing resources are used \cite{Shor}. This raised concerns about the 
security of public-key cryptography with respect to hypothetical quantum attacks. 
However, the solution to the problem already existed (theoretically) and relied 
not on computational complexity-based security but on information-theoretic security 
(ITS). The solution is the quantum key distribution (QKD) \cite{Ekert91}.

While extremely appealing from the theoretical viewpoint, technical difficulties 
precluded the wide implementation of QKD for decades, since its theoretical introduction 
in the 1980s of the last century. However, the situation has improved significantly 
in the last few years and QKD solutions are blossoming. However, the remaining significant 
obstacle is the distance on which the QKD can be performed. This is due 
to the suppression of photons in the optical medium. Therefore, ground-based optical 
fibre links allow for practical QKD (sufficiently high key exchange rate) over 
distances not longer than approximately 100 km \cite{range}. A possible, but costly 
and challenging, solution to the problem is to utilise space. Due to the much weaker
suppression of photons in air and cosmic vacuum, QKD can be performed at much longer 
distances \cite{range}.  
   
The potential solution to the problem of the constraint on the distance
at which the QKD can be performed on the ground is given by quantum 
repeaters. However, the technology is not mature enough to be implemented     
in the present realisations of the QKD solutions. As a consequence, in 
the current implementations of QKD networks (quantum networks), 
classical trusted nodes must be used. The QKD networks are, consequently, 
hybrid networks with quantum links and classical nodes. 
Examples of experimental realisations of such networks are: 
Tokio network \cite{tokio}, Beijing-Shanghai network \cite{Beijing-Shanghai}, 
Madrid network \cite{madrit}.

However, the existence of the classical nodes raises security concerns.
Although the QKD link has been shown to be ITS-safe, nodes can become 
a source of information leakage. The purpose of this article is to present 
a scheme that will significantly improve the security of hybrid 
QKD networks. 

Notably, at this point, QKD algorithms are also potentially vulnerable  
to the man-in-the-middle (MITM) attacks. Usually, this problem is resolved 
by applying the authentication of the classical nodes. Since the idea 
of employing QKD is to eliminate non-ITS protocols, the authentication 
of QKD nodes must be performed with the use of the Wegman-Carter protocol,
which has been proven to be of the ITS class. More information about QKD 
authentication can be found in Ref. \cite{autentykacja}. Although the 
authentication problem can be successfully resolved in this way, this does 
not concern end-to-end encryption (E2EE). 

From a theoretical point of view, using QKD combined with a one-time pad (OTP) 
guarantees end-to-end ITS communication. However, with actual limitations on 
the key exchange rate, this approach is too slow to be used instead of classical
communication. An alternative to OTP is to use weaker (non-ITS) symmetric 
cryptography algorithms that are generally resistant to quantum attacks 
\cite{post-quantum-crypto} and this seems to be the way the QKD network could 
be used. A completely different direction for preparing for quantum attacks, 
which does not rely on QKD network, is post-quantum public-key cryptography, 
which has been intensively developed in recent years \cite{post-quantum-crypto2}. 

In this work, we consider the potential vulnerability in the quantum network 
based on trusted nodes with multiple paths \cite{multipath}, we construct two 
models of possible attacks, the uncorrelated attack, in which each node has 
a certain probability of becoming compromised, and the correlated attack, in 
which the opponent owns certain resources that could be used to compromise 
the security of certain nodes. Furthermore, based on \cite{Zhou2019SecurityAA}, 
we extend the QKD multiple-path distribution protocol, assuming that the paths 
can overlap. Our analysis shows that to improve security, it is (under certain 
conditions) more optimal to add interlinks between disjoint paths, instead of 
adding a new path. A similar concept using overlapping paths has been 
presented in \cite{podobnypomysl}, where security is improved by introducing
complete subgraphs (``cities''). While preparing this article, another work 
that addresses security issue of overlapping multiple paths has appeared
\cite{securityrosjanie}.

\section{The multiple paths protocol}
\label{sec:protocol}

One of the central concepts behind the design of a telecommunication network 
is redundancy. To marginalise the probability of a lack of connectivity between 
two nodes, there must be at least two alternative paths that connect arbitrary 
two nodes. Here, we assume that the same must concern QKD networks, in particular 
the hybrid QKD networks under consideration. This approach has been proposed 
in \cite{multipath} and recently explored in \cite{solomons2021scalable}, in 
which optimal key flooding is considered.  

Now to help the reader better understand further material, we introduce problems 
concerning security and concepts of multipath protocol. Consider a simple 
uncorrelated attack scheme, in which the possibility of a successful attack on 
any node is $p$ and $n$ is the total number of intermediate nodes on the path 
(excluding communicating nodes/parties). The probability that at least one node 
has been successfully attacked is $P_1=1-(1-p)^n \approx np$, where the approximation 
is valid for $p \ll \frac{1}{n}$. So, roughly the probability of an attack 
on the network grows with the number of nodes. If the network is a hybrid QKD 
network, this would be equivalent to leaking a secret key exchanged via the 
attacked node. 

Following this simple model, we find the probability that at least
two nodes have been attacked: $P_2 \approx  (np)^2$.  Therefore, under 
the condition $p \ll \frac{1}{n}$, the probability of a successful
simultaneous attack on at least two nodes is quadratically lower than
in the previous case. 

Following the above observations, let us consider a scenario in which 
a secret key $K$ is composed of two parts $K_1$ and $K_2$ of equal 
length. For example, if the key $K$ is devoted to be applied in the 
symmetric AES-256 algorithm, both parts $K_1$ and $K_2$ are 256 bits 
long. Now, we require that knowing one of the keys gives us zero 
knowledge of the key $K$. This requirement can be easily satisfied by 
the One-Time Pad (OTP) applied to the two parts $K_1$ and $K_2$, so that: 
\begin{equation}
K = (K_1,K_2) = K_1 \oplus  K_2,
\label{OTPKK}
\end{equation}
where $\oplus$ is the XOR operation (addition modulo two).  

Now, let us suppose that $K_1$ and $K_2$ are two bit 
strings that are distributed using two different paths 
in the QKD network. The two paths connect two parties 
(nodes $A$ and $B$), between which the key $K$ is 
exchanged (see Fig. \ref{SSQKDPlot1}). For every two 
adjoint nodes on the network $(i,j)$ a secret key $K_{ij}=K_{ji}$ 
is established via QKD. Then, in a \emph{hop-by-hop} 
approach, if node $i$ wants to send a secret message 
$M$ to the node $j$, the OTP encryption is used by 
evaluating the cipher $C=M\oplus K_{12}$. The classical 
ciphertext $C$ is transmitted to the subsequent node 
via the classical (untrusted) channel. Then, by evaluating  
$M=C\oplus K_{12}$, the ciphertext is decrypted at the 
node $j$.

\begin{figure}[ht!]
\centering
\includegraphics[width=5cm,angle=0]{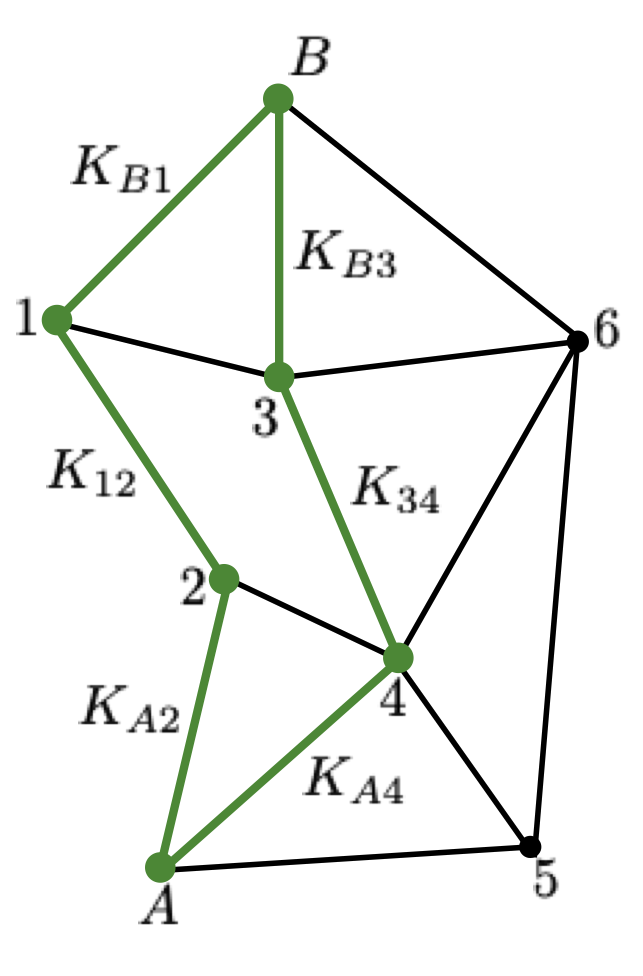}
\caption{Pictorial representation of the exchange 
of a secret key $K=(K_1,K_2)$ shared in two paths
connecting the communicating nodes $A$ (transmitter) 
and $B$ (receiver). Here, the part $K_1$ is exchanged
via the nodes 1 and 2, while the part $K_2$ is exchanged
via the nodes 3 and 4. In the first step the, the 
partial keys $K_1$ and $K_2$ are sent from the 
node $A$ to the nodes $2$ and $4$ respectively. 
In both cases the OTP encryption with the QKD 
keys ($K_{A2}$ and $K_{A4}$) are used. The procedure 
is then continued, via the nodes $1$ and $3$, to 
the end node $B$.}
\label{SSQKDPlot1}
\end{figure}

The protocol introduced in this section quadratically 
improves the security of secret key exchange in a hybrid 
QKD network. However, this is achieved by the cost of 
doubling the number of keys exchanged. Therefore, we 
reduce the performance of the system by a factor of 
two. However, the quadratic improvement by the linear 
cost seems to be a beneficial solution.    

The decomposition of the secret key into two parts is 
an example of a secret sharing. It is worth emphasising 
that the idea can be generalised by splitting the secret 
key into three or more constituents. In this case, 
Eq. \ref{OTPKK} generalises to:
\begin{equation}
K = (K_1,K_2,\dots,K_N) = K_1 \oplus K_2\oplus \dots \oplus K_N,
\label{OTPKN}
\end{equation}
where $N$ is the number of different paths. The realisation 
of the case with $N>2$ allows for a further reduction of 
the probability of the attack to $\sim (np)^N$. However, 
this is due to the cost of both much complex topology of 
the QKD network and its lower performance. In the next 
sections, we perform a detailed analysis of two types of 
attack on trusted nodes and in Sec. \ref{sec:multipathscheme} 
we present the concept in which key exchange can be done 
with paths that cross each other along with the discussion
of performance of this new protocol.

\section{Security considerations on Multiple 
Non-Overlapping Paths}\label{sec:attacks}

We consider two models of possible attack, first for an 
uncorrelated attack that simulates leaking and publishing 
information (secret key) steaming from random failure of 
certain nodes, and second for a correlated attack where 
the party is assumed to have certain resources that can 
be used to take over some nodes and gain information. 
We model the QKD network by an unweighted graph, where 
two communicating nodes can always be connected by a 
certain number of disjoint paths. This is justified as 
a desired property of a real telecommunication network 
\cite{Elliott_2002}.

\subsection{Uncorrelated attacks}
\label{uncorrelated-attack}

\subsubsection{General formulation}

{\it Given a graph with two distinguished nodes named $A$ and $B$ 
(Alice and Bob), the $i$-th node is marked with a certain number 
$p_i$ representing the probability that during the protocol the 
$i$-th node will be hacked and publicly reveal secret key 
($p_i=0$ represent complete trust while $p_i=1$ represent 
fully corrupted node). $A$ and $B$ can be connected with some 
disjoint paths, and to compromise security of protocol at least 
one node on each path must become untrusted. The task is to find 
a system of disjoint paths between $A$ and $B$ that minimise the 
probability of hacking communication.}

For a predefined system, calculating the probability is an easy 
task. Let $\mathcal{R}$ be the family of all paths in the solution 
(where the path is considered to be a set of intermediate nodes,
\emph{i.e.} excluding the $A$ and $B$ nodes). Then, the probability 
of hacking is given by:
\begin{equation}
   P = \prod_{\mathcal{R}_j \in \mathcal{R}} \left(1-\prod_{i \in \mathcal{R}_j} (1-p_i)\right). 
\end{equation}
Here, each node is identified with its label $i$. 

This task poses an algorithmic challenge, and to the best 
of our knowledge, there is no standard effective method to 
solve this in this form. Instead, we could search for a 
strategy that guarantees a certain threshold of security 
level and is flexible within this limit. As a benefit, this 
may also allow us to adjust the algorithm to network traffic. 
In the next paragraph, we present an approach to the problem 
with respect to the considerations mentioned above.

\subsubsection{Simplified problem}

Here, we assume that each node has the same probability $p$, 
which is small enough so that we can use the approximation 
$1-(1-p)^n \approx np $. The approximation is satisfied
under the assumption that $np \ll 1$ and is further analysed 
in Sec. \ref{sec:error}. According to the model above, we 
can simply take $p :=\max\limits_{i}  (p_i)$. If we consider 
$N :=|\mathcal{R}|$ paths each containing $n_j :=|\mathcal{R}_j|$ 
(intermediate) nodes, then the probability of hacking equals:
\begin{equation}
P=(n_1 p) (n_2 p) ... (n_N p) \leq  p^N \left(\frac{\sum\limits_{1\leq j \leq N} n_j}{N} \right)^N = \left(  \overline{n} p \right)^N,
\end{equation}
where $\overline{n}$ denotes the average path length (in the sense of the number 
of intermediate nodes), and we use a well-known inequality: 
\begin{equation}
\label{ineqality}
    \sqrt[N]{a_1 a_2 ... a_N} \leq \frac{a_1+a_2 +... + a_N}{N}.
\end{equation}
Notice that the equality is satisfied for $a_1=a_2=...=a_N$,
which in the case under consideration corresponds to the paths 
of equal length. In a real QKD network, we may expect that the 
lengths of the paths are similar, and in this case the given 
upper bound could be a good approximation.

\subsubsection{Solution to the simplified problem}

Now we can slightly reformulate the problem so that we 
do not minimise $P$ but $P':=(\overline{n} p)^N$ instead. 
This depends on two factors: the number of paths and the 
average length of these paths. In many analyses, equally 
length paths are considered, and therefore it is always 
optimal to use as many paths as possible, but it turns 
out that it may not be desired for an arbitrary network. 
The analysis of an educational example is provided in \nameref{sec:app_attack_example}.
For fixed $N$ solution which is given and can be obtained 
by the minimum-cost flow algorithm (with unit capacities 
and certain transformation of graph), there also exist other 
simpler algorithms like Suurballe's algorithm (vertex disjoint 
path version) (see Ref. \cite{disjoint_paths}). We may
assume that in realistic case $N$ will not exceed 10. 
Therefore, an efficient algorithm could be obtained by 
checking each possible number of paths separately. 

One last remark is about the practical aspect of the obtained 
solution. If we accept loss in security level (but within 
threshold bound), we can try to add traffic management within 
the algorithm simply by manipulating weights of edges. Undesired 
routes will be less likely to be chosen. However, in this article, 
we do not develop this concept further - it is left for future work.

\subsubsection{Multiple communicating parties}

The problem arises when we have more than two communicating parties, 
and we do not allow a path to share a link (for quantum networks 
the effectiveness of links is the main restriction). This is an 
extensively studied problem called $k$-EDP ($k$-edge-disjoint path 
problem) \cite{EILAMTZOREFF1998113}

\begin{definition}
Consider the following well-known problem, which is called the 
$k$-disjoint paths problem ($k$-DPP). For a given graph 
$G$ and a set of $k$ pairs of terminals in $G$, the objective is 
to find $k$ vertex-disjoint paths connecting given pairs of terminals 
or to conclude that such paths do not exist.
\end{definition}

It is proven that this problem is NP-complete \cite{k-DPP_KARP}. 
Therefore, finding many paths for each of the pairs $k$ is ``at least 
of class NP,'' as this problem can be reduced to $k$-DPP by 
adding an appropriate number of paths between distinguished pairs 
of terminals. However, if $k$ is fixed, polynomial solutions exist 
for $k$-DPP and even for shortest $k$-DPP \cite{lochet2020polynomial,KAWARABAYASHI2012424}, so we can hope 
to search for a solution while dividing the network into clusters. 

\subsection{Correlated attacks with finite resources}

As in the previous section, let us first formulate a general problem:

{\it Consider a graph with a distinguished pair of nodes $(A,B)$. 
We assume that there exists a system of (disjoint) paths connecting 
$A$, $B$ and the adversary knowing this system, having some resources 
which can be used to take control of nodes and extract keys. The 
following assumptions are made:
\begin{itemize}
    \item Adversary has full knowledge about the arrangement of the system.
    \item Hacking one node on the path makes this path untrusted.
    \item Communication is hacked if each path is untrusted.
    \item Probability $p_i$ that the $i$-th node becomes untrusted depends 
    on amount of allotted resources $r_i\geq 0$ and is given with 
    proper continuous function $p_i = f_i(r_i)$ specified for this node. 
    Because $f_i(r_i)$ has an interpretation of probability, 
    it takes values from the range $[0,1]$.
    \item Resources are bounded, that is, $\sum\limits_i r_i \leq R$.
    \item For each node $f_i(0)=0$.
\end{itemize}
We seek tactics (system of paths) that minimise the probability of 
hacking communication.}
\\

\indent As before, solving problem in this form poses a challenge, 
and even for a fixed system, calculating the minimal probability of 
hacking (corresponding to optimal redistribution of resources) is 
difficult due to the continuous character of variables and unknown 
functions. Therefore, again, we need simplification.

\subsubsection{Simplified problem}

We first make the following observation:
\begin{lemat}
Without loss of generality, we can assume that each function 
$f_i$ is not decreasing and the condition $\sum\limits_i r_i = R$ is used.
\end{lemat}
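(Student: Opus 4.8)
The plan is to treat the lemma as two independent simplifications of the adversary's optimisation, noting that for a \emph{fixed} system of paths $\mathcal{R}$ the quantity the adversary maximises is
\begin{equation}
P(\vec{p}) = \prod_{\mathcal{R}_j \in \mathcal{R}} \left(1 - \prod_{i \in \mathcal{R}_j}(1-p_i)\right), \qquad p_i = f_i(r_i),
\end{equation}
subject to $\sum_i r_i \le R$ and $r_i \ge 0$. The single structural fact I would use throughout is that $P$ is non-decreasing in each argument $p_i \in [0,1]$: raising one $p_i$ cannot lower the factor $1-\prod_{k}(1-p_k)$ of the path containing it, and all factors lie in $[0,1]$, so the whole product cannot decrease.

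For the monotonicity claim I would replace each $f_i$ by its running maximum (monotone envelope) $\tilde{f}_i(r) := \max_{0 \le s \le r} f_i(s)$, which is non-decreasing by construction and, since $f_i$ is continuous and $[0,r]$ is compact, is well defined with the maximum attained. I then argue that the adversary's optimal value is unchanged. In one direction, $\tilde{f}_i \ge f_i$ pointwise together with the monotonicity of $P$ gives that the optimum under $\{\tilde f_i\}$ is at least the optimum under $\{f_i\}$. In the other direction, given any allocation $\{r_i\}$ feasible under $\{\tilde f_i\}$, I pick for each $i$ a point $s_i \in [0,r_i]$ with $f_i(s_i) = \tilde f_i(r_i)$; then $\sum_i s_i \le \sum_i r_i \le R$, so $\{s_i\}$ is feasible under $\{f_i\}$ and reproduces exactly the same vector $\vec{p}$, hence the same value of $P$. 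Thus the two optima coincide, and we may assume every $f_i$ is non-decreasing.

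Granting monotonicity of the $f_i$, the equality $\sum_i r_i = R$ follows immediately: if an optimal allocation satisfies $\sum_i r_i < R$, I move the unused budget $R - \sum_i r_i$ onto any single node $i_0$, increasing $r_{i_0}$. Because $f_{i_0}$ is now non-decreasing this does not decrease $p_{i_0}$, and by monotonicity of $P$ it does not decrease the objective; the resulting allocation saturates the budget and is at least as good. Hence an optimal allocation using the full resource budget always exists.

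The main obstacle is the monotonicity step, and specifically the achievability half of the envelope argument: I must verify that shrinking each $r_i$ down to the envelope-attaining $s_i$ keeps the allocation feasible (which it does, as the total only decreases) while leaving every $p_i$ — and therefore $P$ — exactly unchanged. The continuity hypothesis on $p_i = f_i(r_i)$ together with $f_i(0)=0$ is what guarantees that the maximum defining $\tilde f_i$ is attained, so no supremum or limiting subtleties arise; the remaining reasoning is a direct consequence of the coordinatewise monotonicity of $P$.
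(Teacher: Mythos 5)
Your proof is correct and takes essentially the same approach as the paper: the paper's own argument replaces each $f_i$ by the running maximum $f(x)=\max\limits_{y\leq x}(f(y))$ and dismisses the budget-saturation claim as straightforward, which is precisely your monotone-envelope construction followed by slack reallocation. You merely supply the details the paper leaves implicit (attainment of the maximum via continuity and compactness, feasibility in both directions of the envelope comparison, and the coordinatewise monotonicity of $P$).
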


\begin{proof}
We do not need to use all resources, so if it is optimal to use $x_2$ 
resources for a certain node and there exists a $x_1$, such that $x_1 <x_2$ 
and $f(x_1)>f(x_2)$, then it is optimal for the adversary to use $x_1$. 
The adversary will obtain the same result using the alternative function
$f(x)=\max\limits_{y\leq x}(f(y))$. The second part of the lemma is 
straightforward.
\end{proof}

In real communication networks, we can assume that each node does not 
differ much in construction, and thus their characterisation will have 
much in common. At the same time, one shall not allow the adversary to 
easily take control of the node (than the mean $f(R) \ll 1 $), so the 
arguments of the function could be considered small, which allows us 
to expand the function $f$ in series and consider its linear approximation. Alternatively, for a given function $f(x)$, we can construct a new function 
$g(x)$ that is linear up to a certain point, then constant (equal 1) 
and satisfy $f(x) \leq g(x)$.

We summarise this discussion with the following additional assumptions 
for the problem:
\begin{itemize}
    \item The function $f$ is the same for all nodes, \emph{i.e.} 
    $\forall_i f_i=f$.
    \item The function $f$ is not decreasing, and the opponent always uses all available resources.
    \item The function $f$ is expressed in following form:
    \[
    f(x)= 
\begin{dcases}
    \alpha x,&  x\leq \frac{1}{\alpha}\\
    1,              & x\ge \frac{1}{\alpha}
\end{dcases}.
\]
\end{itemize}

With this simplification it turns out, that a sensible analysis can be made. 
We first develop optimal adversary strategy for a single path of length $n$.
To solve this problem we use Lagrange multiplayer method. The function we want 
to maximize (probability of hacking) is: 
\begin{equation}
P_{\text{sp}}(r_1,r_2...,r_n)=1-(1-f(r_1))(1-f(r_2))...(1-f(r_n)),
\end{equation}
with constrain $G(r_1,...,r_n)= \sum_i r_i -R = 0$.

As a result, we see that among the candidates for the global extremum 
(points $(r_1,r_2,...,r_n)$) some of $r_i=0$ and the rest are equal 
to each other. Therefore, the set of extreme values is $\{ P_k | 
k \in {1,...,n} \}$, where $P_k=1-\left(1-\alpha \frac{R}{k}\right)^k$ 
and since this represents the Euler sequence that is decreasing, we 
obtain the global maximum for $k=1$, which corresponds to placing all 
available resources on a single node.
\\

We summarise it in the following theorem:
\begin{tw}
Given a single path, the optimal strategy for an adversary 
is to attack a single node, and the probability of hacking is $\alpha R$. 
\end{tw}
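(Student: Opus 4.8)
The plan is to strip away the piecewise definition of $f$, reduce the constrained maximisation to a single one-parameter comparison, and close that comparison with an elementary monotonicity estimate. Since the claimed optimum $\alpha R$ must be a probability we are implicitly in the regime $\alpha R\le 1$, and then the constraint $\sum_i r_i=R$ with $r_i\ge 0$ forces every $r_i\le R\le \tfrac{1}{\alpha}$, so each node lies on the linear branch $f(r_i)=\alpha r_i$. Thus the task becomes: maximise $P_{\text{sp}}=1-\prod_{i=1}^{n}(1-\alpha r_i)$, equivalently minimise $Q:=\prod_{i=1}^{n}(1-\alpha r_i)$, over the simplex $\{r_i\ge 0,\ \sum_i r_i=R\}$.

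First I would run the Lagrange analysis already set up in the excerpt. Stationarity $\partial_{r_j}Q=\lambda$ on the active coordinates gives $-\alpha\prod_{i\neq j}(1-\alpha r_i)=\lambda$ for each $j$ with $r_j>0$; comparing two such equations and cancelling the common (strictly positive) factors yields $1-\alpha r_j=1-\alpha r_k$, i.e. all active resources are equal. Hence a critical point with exactly $k$ active nodes assigns $R/k$ to each, and the objective can only take the values $Q_k=(1-\alpha R/k)^{k}$, i.e. $P_k=1-\bigl(1-\tfrac{\alpha R}{k}\bigr)^{k}$ for $k\in\{1,\dots,n\}$. It then remains to show this finite family is maximised at $k=1$, where $P_1=\alpha R$.

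The decisive step is the monotonicity of the Euler-type sequence $Q_k=(1-\alpha R/k)^{k}$. I would prove it is increasing in $k$ (so that $P_k$ decreases) by treating $k$ as continuous and differentiating $\log Q(k)=k\log(1-\alpha R/k)$: with the substitution $t=\alpha R/k\in(0,1)$ the derivative has the sign of $\psi(t):=\log(1-t)+\tfrac{t}{1-t}$, and since $\psi(0)=0$ and $\psi'(t)=\tfrac{t}{(1-t)^2}>0$ on $(0,1)$ we get $\psi(t)>0$. Hence $Q_k$ is increasing, $P_k$ is decreasing, the maximum sits at $k=1$, and $P_{\text{sp}}=1-(1-\alpha R)=\alpha R$, achieved by placing all resources on a single node.

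The hard part is exactly this monotonicity claim: the Lagrange computation only isolates the candidate set $\{P_k\}$ and does not by itself exclude some $k>1$. A cleaner route that avoids the Euler estimate, which I would note as an alternative, is a pairwise-exchange argument: fixing all coordinates but two and holding $s:=r_i+r_j$ constant, $Q$ reduces up to a fixed factor to the downward-opening quadratic $(1-\alpha r_i)(1-\alpha r_j)=1-\alpha s+\alpha^2 s\,r_i-\alpha^2 r_i^2$, which is minimised at an endpoint $r_i\in\{0,s\}$; iterating moves all mass onto a single coordinate without ever increasing $Q$, giving $\min Q=1-\alpha R$ directly. Finally I would check that the optimal $k=1$ keeps $r_1=R\le \tfrac1\alpha$, so the solution indeed lies on the linear branch and the value $\alpha R$ is self-consistent.
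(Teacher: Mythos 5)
Your proposal follows essentially the same route as the paper: the Lagrange analysis forcing the active resources to be equal, the reduction to the candidate set $P_k = 1-\bigl(1-\tfrac{\alpha R}{k}\bigr)^{k}$, and the monotonicity of this Euler-type sequence giving the maximum at $k=1$. Your version is in fact more complete than the paper's, which merely asserts that the Euler sequence is monotone --- you supply the actual derivative estimate $\psi(t)=\log(1-t)+\tfrac{t}{1-t}>0$, verify the self-consistency condition $R\le\tfrac{1}{\alpha}$ placing everything on the linear branch, and offer a clean pairwise-exchange alternative (concavity of $(1-\alpha r_i)(1-\alpha r_j)$ at fixed $r_i+r_j$ pushing mass to an endpoint) that bypasses the Lagrange machinery entirely.
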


From this we obtain an important conclusion about the situation with many paths.
\begin{corollary}
Given $N$ disjoint paths, the optimal strategy for an adversary is to attack only 
one node on each path.
\end{corollary}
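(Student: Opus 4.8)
The plan is to reduce the $N$-path problem to $N$ decoupled single-path problems by exploiting the multiplicative structure of the hacking probability, and then to invoke the preceding Theorem on each factor separately.

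First I would write the total probability of hacking as a product over paths. Because the paths are vertex-disjoint and the communication is compromised only when \emph{every} path is untrusted, the factor associated with each path depends on a disjoint set of nodes, so
\begin{equation}
P = \prod_{j=1}^{N} P_j, \qquad P_j = 1 - \prod_{i \in \mathcal{R}_j}\left(1 - f(r_i)\right).
\end{equation}
Next I would fix the amount of resource the adversary commits to each path, $R_j := \sum_{i \in \mathcal{R}_j} r_i$, so that $\sum_{j=1}^{N} R_j = R$. For such a fixed split, the factor $P_j$ depends only on how the budget $R_j$ is distributed among the nodes of path $j$, and these sub-problems are completely independent of one another. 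Applying the Theorem to path $j$ with budget $R_j$, the factor $P_j$ is maximised by concentrating the whole budget on a single node, giving $P_j = \alpha R_j$ (saturating at $1$ when $\alpha R_j \ge 1$). Hence, for every admissible split $\{R_j\}$,
\begin{equation}
P \le \prod_{j=1}^{N} \min\!\left(\alpha R_j, 1\right),
\end{equation}
with equality attained precisely by a single-node attack on each path.

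Finally I would argue that the global optimum inherits this structure. The genuine maximum of $P$ is obtained by \emph{additionally} optimising over the splits $\{R_j\}$ subject to $\sum_j R_j = R$; but since the per-split optimum is already realised by single-node attacks, the joint maximiser can be chosen of the same form. I would also note that no path may be left unattacked: by the assumption $f(0)=0$, an unallocated path forces its factor, and therefore $P$ itself, to vanish, so all $N$ paths must be attacked. Consequently the optimal adversary attacks exactly one node on each of the $N$ paths.

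The step I expect to demand the most care is the interchange of the two optimisations — maximising each factor independently for a fixed budget versus maximising jointly under the single global constraint. This is legitimate precisely because the single-path result of the Theorem holds for \emph{every} value of the budget $R_j$, not merely for the total $R$; this uniformity is what allows the inner optimisation (intra-path allocation) to be resolved before the outer one (the distribution of resources across paths), so that the structural conclusion survives intact regardless of the eventual optimal split.
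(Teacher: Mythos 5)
Your proof is correct and rests on exactly the same two ingredients as the paper's: the multiplicative decomposition of $P$ over vertex-disjoint paths and the single-path Theorem applied to each path's own budget $R_j$ (the paper phrases the consolidation step as a one-line exchange argument --- relocating the resources of any path carrying two or more attacked nodes onto a single node strictly increases that factor, hence the product --- while you phrase it as a nested inner/outer optimisation, but the content is the same). Your explicit treatment of the saturation case $\alpha R_j \ge 1$, of the optimisation interchange, and of the $f(0)=0$ argument forcing every path to be attacked makes the argument somewhat more careful than the paper's, without changing the route.
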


\begin{proof}
Let us assume on the contrary that in optimal strategy for the opponent 
there exists a path on which two or more nodes are attacked. If we 
relocate resources from these nodes to a single node on this path, we 
will have a higher probability of hacking this path and, therefore, 
obtain a better strategy as the probability of hacking system is the
product of probabilities for individual paths.
\end{proof}

If we have $N$ paths and $r_j$ are resources used to hack the $j$-th 
path (at a single node), then the probability of hacking the protocol is:
\begin{equation}
    P(r_1,r_2,...,r_N) = (\alpha r_1)(\alpha r_2) ... (\alpha r_N) \leq  \left(\frac{\alpha R}{N} \right)^{N}.
\end{equation}
Here, we assumed that $R \ll \frac{1}{\alpha}$, and we used Eq. \ref{ineqality}.
 
\subsubsection{Solution}

Choosing the hacking strategy, we can focus on minimising term
$\left(\frac{\alpha R}{N} \right)^{N}$. There is only one variable 
to control: the number of disjoint paths. As $\alpha R$ in this 
approximation shall always be less than $1$ (otherwise, the 
approximation we used fails), we are interested in increasing $N$. 
The maximal number of disjoint paths between pair $A$ and $B$ is 
equal to the minimal size of the vertex cut of that pair, thanks 
to Menger's theorem. 

Through vertex cut is a notion that has no 
unique definition in the literature, we restate it here:
\begin{definition}
\textbf{A-B vertex cut} is a set of vertex that does not 
contain $A-B$ so that after the removal of this set from 
the graph, there is no path between $A$ and $B$. Later in 
the article, we will refer to it as cut, while the default 
vertex $A$ and $B$ will be sender and receiver (Alice and 
Bob). We say that the vertex cut is \textbf{minimal} if 
there is no cut with a smaller order.
\end{definition}
We summarise our conclusions with the following theorem:
\begin{tw}
To improve security against correlated attack for users $A$ and $B$, 
the desired strategy is to increase the order of the $A-B$ minimal 
vertex cut, that is, the number of disjoint paths.
\end{tw}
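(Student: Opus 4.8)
The plan is to collapse the defender's optimization over all admissible path systems into a one-parameter minimization in $N$, and then translate the resulting graph-theoretic quantity via Menger's theorem. First I would invoke the preceding Corollary: against any fixed system of $N$ disjoint paths the adversary attacks exactly one node per path, so its success probability is $\prod_{j=1}^{N}(\alpha r_j)$ subject to $\sum_j r_j = R$. Maximizing this product over the resource allocation is precisely the equality case of inequality \ref{ineqality}; by the AM--GM bound the maximum is attained at the uniform allocation $r_j = R/N$ (which respects $r_j \le 1/\alpha$ since $\alpha R < 1 \le N$), giving the adversary's optimal probability $(\alpha R/N)^N$. The structural point I would emphasise is that this optimal value depends on $N$ alone and \emph{not} on the lengths of the paths: in the linear regime compromising a single node costs the same no matter how many nodes a path contains. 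This is exactly what distinguishes the correlated case from the uncorrelated one, where longer paths incurred a penalty through $\overline{n}$, so here there is no length--versus--number trade-off to balance.

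Second, I would establish strict monotonicity of $h(N) := (\alpha R/N)^N$ throughout the admissible regime. Writing $c := \alpha R$, one has $\ln h(N) = N\bigl(\ln c - \ln N\bigr)$, with derivative $\ln(c/N) - 1$; since the model requires $c < 1$ and $N \ge 1$, we have $c/N \le c < 1 < e$, whence $\ln(c/N) - 1 < 0$ and $h$ is decreasing. Equivalently, the ratio $h(N+1)/h(N) = \frac{c}{N+1}\left(1 - \frac{1}{N+1}\right)^{N}$ is manifestly less than $1$. Hence the adversary's best attainable probability shrinks monotonically as $N$ grows, and the defender's optimal strategy is to make the number of disjoint paths as large as the topology allows.

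Finally, I would apply Menger's theorem in its vertex form, as recalled just before the statement: the maximum number of internally vertex-disjoint $A$--$B$ paths equals the minimum order of an $A$--$B$ vertex cut. Combining this with the monotonicity above, minimizing the adversary's success probability is equivalent to maximizing $N$, which in turn is equivalent to maximizing the order of the minimal $A$--$B$ vertex cut. This delivers the theorem.

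I expect the genuine obstacle to be conceptual bookkeeping rather than computation: one must verify that $(\alpha R/N)^N$ is not merely an upper bound but the value the adversary actually achieves (so the defender truly faces this quantity), and that enlarging $N$ cannot open an alternative line of attack the adversary could exploit more cheaply. Both worries dissolve once the path-length independence of the optimal single-node attack is made explicit. The remaining ingredients---AM--GM, the elementary monotonicity estimate, and the (standard) vertex Menger theorem---are routine, so the argument is in essence a monotonicity statement presented in graph-theoretic language.
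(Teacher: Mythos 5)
Your proposal is correct and follows essentially the same route as the paper, which presents this theorem as a summary of the immediately preceding discussion: reduce via the corollary to one attacked node per path, bound the adversary's success probability by $\left(\frac{\alpha R}{N}\right)^{N}$ via the AM--GM inequality \ref{ineqality}, observe that this quantity decreases in $N$ because $\alpha R<1$, and invoke Menger's theorem to identify the maximal $N$ with the order of the minimal $A$--$B$ vertex cut. Your write-up is in fact tighter than the paper's informal argument, since you verify the equality case of AM--GM (so $(\alpha R/N)^N$ is the adversary's attained optimum, not merely a bound) and give the explicit monotonicity computation for $h(N)=(\alpha R/N)^N$, both of which the paper asserts without proof.
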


Finding order of minimal vertex cut is a problem equivalent to (after a simple transformation of graph) solving the \emph{max-flow problem}.

\section{Multiple Overlapping Paths scheme} \label{sec:multipathscheme}

We have performed an analysis of the security of multiple paths
scheme models under the assumption that all paths are disjoint. 
We now present a  Multiple Overlapping Paths scheme (MOPs), 
where additionally to system of disjoint of paths the interpath 
links exist. Such an extension can always be made without loss of 
security, with only slight modification of the well-known 
\emph{hop-by-hop} protocol. Unlike MNOPs, where each intermediate 
node has exactly two links, now it can have more. A similar problem 
was previously analysed in Ref. \cite{podobnypomysl}. However, 
to improve security, the total number of links is increased. 

In this article we follow an alternative idea, increasing security 
with the use of interlinks but without changing the number of links. 
Roughly, we can say that we completely remove one path and use 
its resources (links) to make interconnections between the rest.

\begin{definition}
In the MOPs communication scheme, each node in the network, except 
Alice and Bob, sends the $XOR$ result of all the keys from the neighbour 
connections to Bob via an unencrypted (but authenticated) channel 
(available to Eve). The shared key will be the $XOR$ of all such 
received messages for Bob, and for Alice $XOR$ of all subkeys Alice 
shares with intermediate nodes.
\end{definition}

This idea was originally published in \cite{Zhou2019SecurityAA}, 
where a detailed analysis is included. Here, we recall only the most 
important conclusions. For the case without interlinks between paths, 
this will work as in the classic scheme, with the difference that 
the message is transmitted to Bob instead of to the next node on the 
path, but in \nameref{sec:appNDMS_to_hopbyhop} we show that MOPs can 
be modified so that it mimics the hop-by-hop method, which is important 
for the practical use and efficiency of the network. 

\begin{property}
In MOPs connection is secure only if there exist a path with node 
controlled by adversary. Therefore, to hack a communication between 
$A$ and $B$, adversary must control $A-B$ vertex cut.
\end{property}

There is no reward in using MOPs against uncorrelated attack. It can 
be shown that adding interlinks to a system of disjoint paths does not 
increase the order of minimal $A-B$ vertex cut (for example, because 
the number of nodes connected to Alice does not change and these nodes 
form a vertex cut). However, it can change the number of such cuts and, 
therefore, turns out to be useful against uncorrelated attack. In general, 
to calculate probability that protocol is compromised, one must know 
trustfulness of all nodes and calculate probability that at least one 
of $A-B$ cuts (not necessarily minimal) becomes untrusted. This can be 
really challenging problem in general. We managed to perform analytical 
analysis on certain special type of ``grid-like'' graph that could model 
real QKD network. The model is discussed in the following paragraphs.

\subsection{Intuitive approach}
\label{sec:intuitive_approach}

We begin with an intuitive assertion that adding interlinks instead 
of a new path could perform better than MNOPs. If the probability 
of compromising the security of a node (and therefore leaking the key) 
is sufficiently small, we can take into account only the smallest 
(minimal) $A-B$ vertex cut. The number of such combinations of nodes 
in the classical scheme is 
\begin{equation}
\prod\limits_{1\leq j \leq N}n_j \approx (\overline{n})^N,
\end{equation}
where $n_j$ is the number of intermediate nodes in $j$-th path 
and $N$ is the number of paths. We will improve security by reducing 
the number of $A-B$ vertex cuts in the graph by adding interlinks 
between paths. An example is shown in Fig. \ref{fig:schem-example}. 

\begin{figure}[h]
    \centering
    \includegraphics[scale=0.8]{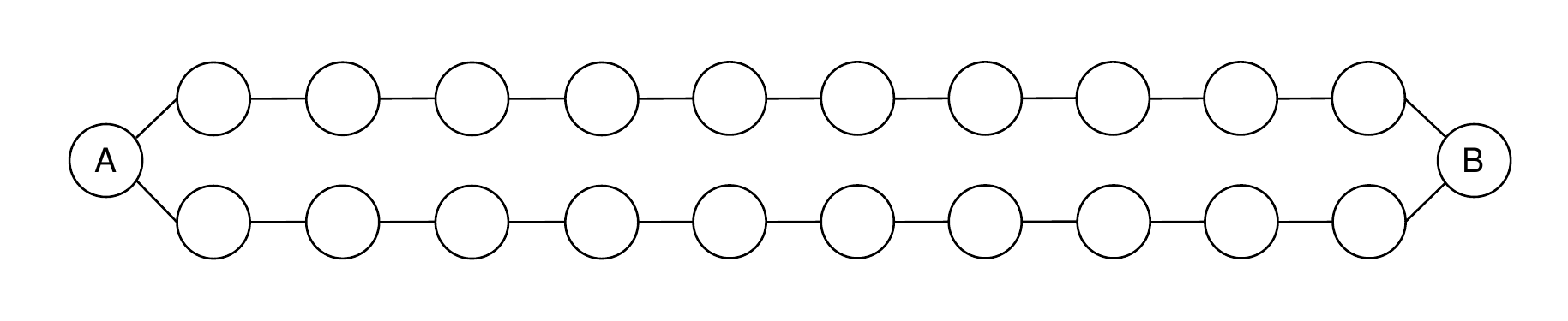}
    \includegraphics[scale=0.80]{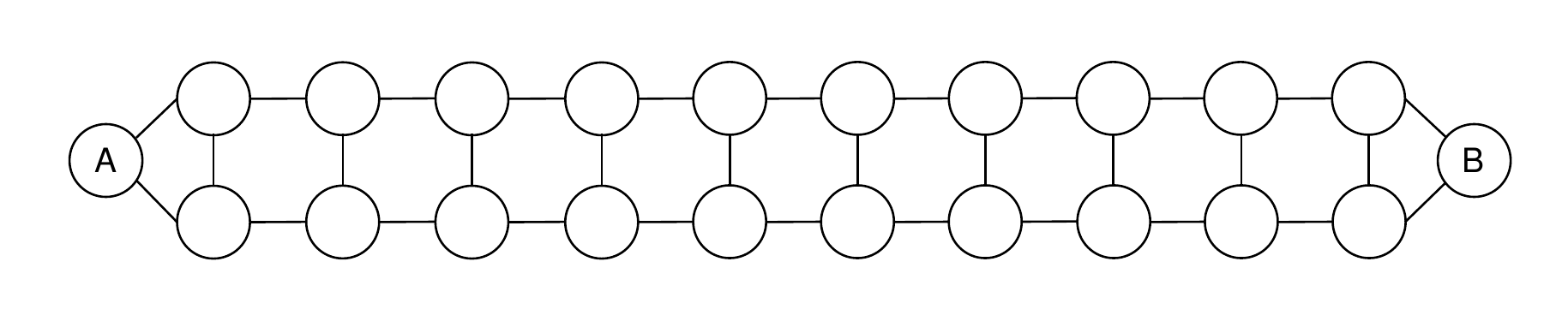}
    \caption{MNOP network (above) with 2 disjoint paths 
    and with additional interlinks (below).}
    \label{fig:schem-example}
\end{figure}

Let us assume that the number of intermediate nodes between $A$ and 
$B$ is $n$ (for every path) and we have initially 2 disjoint paths. 
The number of 2-cuts (cuts of size 2) is $\approx 3n$, for the MNOP 
scheme it would be $n^2$. If $p$ is sufficiently small, we can state that 
only minimal cuts influence the probability of hacking. Then, the 
probability of compromising the security of the MNOP scheme is 
$(np)^2$, while for MOPs: 
\begin{equation}
\underbrace{{2 n \choose{2}}p^2(1-p)^{2n-2}}_{\text{probability of hacking pairs}} \times \underbrace{\frac{3n}{{2 n \choose{2}}}}_{\text{allowed pairs}} 
\approx 3np^2.
\end{equation}
Formula stems from binomial distribution (chance that we control 2 nodes) 
and the probability that these two nodes form the desired hacking cut. 
This can be a significant advantage when $n$ is large. However, we use 
additional links that, in turn, could be used to make another path. 
We now pose the question: ``Can adding interlinks, instead of a new path, 
be a desired strategy?''. It appears that in some cases, especially for 
large $n$, this can be true. Comparing the graph with interlinks and the 
MNOP scheme with one with one additional path, we obtain probability,
respectively $(np)^3$ and $(3np^2)$. The ratio of probabilities is
\begin{equation}
    \eta := \frac{P_{\text{MOP}}}{P_{\text{MNOP}}} \approx \frac{3n p^2}{(n p)^3}
    = \frac{3}{p n^2}.
\end{equation}
If $\eta <1$, it is optimal to use the proposed strategy. But this poses 
a condition on $p$, namely:
\begin{equation}
 p> \frac{3}{n^2}.
\end{equation}
At the same time we have assumed that $p$ is ``sufficiently'' small 
(because we neglected the influence of non-minimal cuts), in fact, 
it must at least meet conditions as in chapter \ref{sec:protocol}, 
\emph{i.e.} $p \ll \frac{1}{n}$. Now, if $n$ is sufficiently large, 
these conditions do not lead to contradiction. In the next chapters, 
we formalize and generalize the above considerations.

\subsection{Adding a single link}
\label{sec:adding_single_link}

After the discussion in the previous section, we could easily notice 
an important property, namely that adding just one link can reduce 
the probability of hacking by a factor of $\approx 1/2$. This approximation 
becomes more accurate as $n$ increases. Consider a MNOPs network with 
two disjoint paths as in Fig. \ref{fig:schem-example}, if we add a 
single intermediate link somewhere in the middle of the network and 
perform similar considerations as in Sec. \ref{sec:intuitive_approach}, 
we obtain that the number of 2-cuts is $\approx  2 \left( \frac{n}{2} \right)^2= \frac{n^2}{2}$. Consequently, for a network with $l$ paths, we can add 
$l-1$ links and reduce the probability of hacking by a factor $\left(\frac{1}{2}\right)^{l-1}$. This can be an important property, 
as by using only a few links, quite a good profit is obtained. 
Unfortunately, this effect does not stack: just from the example 
for two paths presented in Fig. \ref{fig:schem-example} we see that 
adding $n$ interlinks gives as probability reduction by factor 
$\frac{3}{n}$ not $\left(\frac{1}{2}\right)^n$.

\subsection{Formal consideration}\label{subsec:formalconsiderations}

Now, we want to compare two situations, MNOPs scheme with $l+1$ disjoint 
path and strategy presented in Sec. \ref{sec:intuitive_approach} (MOPs) 
where on the behalf of interlinks we remove one path. For the new strategy,
Alice and Bob are connected through $l$ disjoint path each containing $n$
intermediate nodes and each vertex having the same trust level $(1-p)$.  
We introduce the numbering system of vertices: $A$, $B$ for two communicating
vertex, $g_{i j}, \ i \in \{1,...,l\},\ j \in \{1,...,n\}$ for intermediate 
nodes where $i$ and $j$ denote row and column number, respectively.
Adding interlinks can be realized on different strategies. Here, we develop 
a strategy called the \textit{MOPs l-scheme}, one that is probably not optimal 
but provides the possibility of analytical analysis. In the MOPs $l$-scheme 
we connect all vertically adjoined nodes in every ($l-1$)-th column (connecting 
take $l-1$ edges).

Formally, new graph is obtained by adding a set of edges: 
\begin{equation}
\mathcal{E} =\{ g_{ij} \leftrightarrow g_{(i+1)j} |   j \equiv 0 \pmod{l-1}  \; \wedge \; i<l \}.
\end{equation}
It can be easily seen that the number of interlinks created is at most $n$.
The example is presented in Fig. (\ref{fig:l-scheme}).

\begin{figure}[h]
    \hspace{1,5cm}
    \includegraphics[scale=0.7]{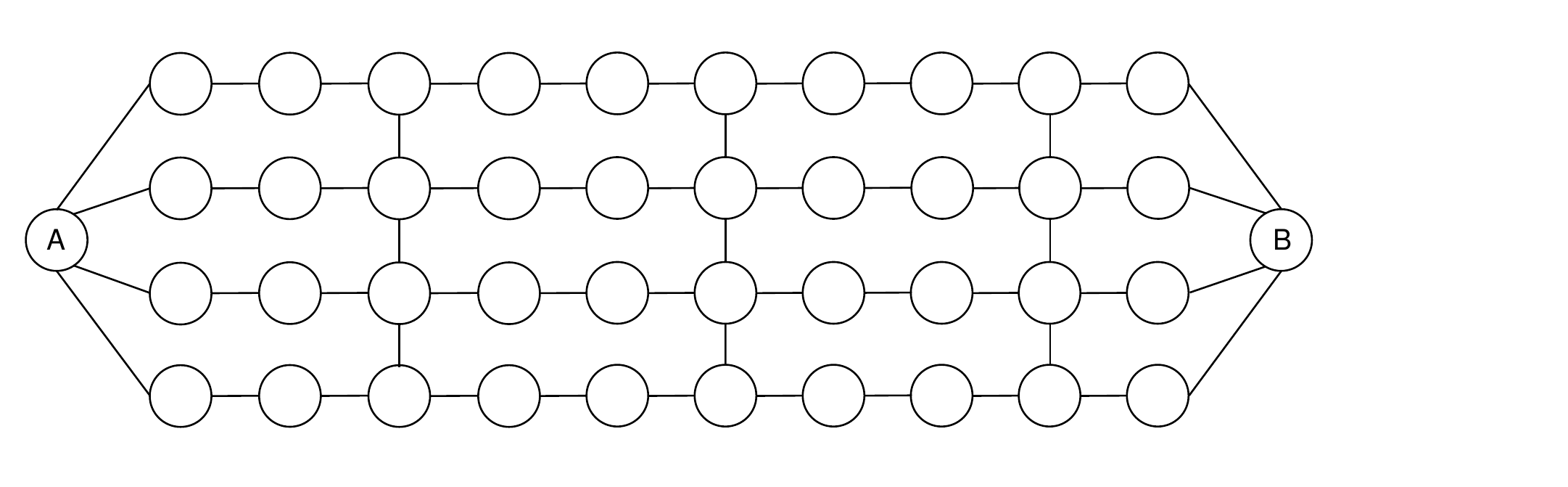}
    \caption{A network with interlinks placement in MOP 4-scheme. The 
configuration is an alternative to 5 disjoint paths (MNOP) scheme.}
    \label{fig:l-scheme}
\end{figure}

In the MNOPs scheme, the probability of hacking is $(1-(1-p)^n)^l 
\approx (n p)^{l}$. We will keep this approximation, but later, in 
Sec. \ref{sec:error}, we perform an analytical analysis and estimate 
its error. To calculate the probability of hacking the $l$-scheme 
we divide the sample probability space into events $H_k, \; 0 \leq k \leq nl$, 
where each event means that exactly $k$ of $nl$ nodes become corrupted. 
This is a complete and pairwise disjoint set of events. Therefore, 
according to the law of total probability and binomial distribution, 
probability of information leakage in the MOPs $l$-scheme is:
\begin{equation}
\label{eqn:gen-prob}
    P(H)=\sum_{k=0}^{n l} {{nl}\choose{k}} p^k (1-p)^{nl-k} P(H|H_k),
\end{equation}
where $H$ is an event related to hacking an $A-B$ vertex cut.
Of course, $P(H|H_k)=0$ if $k<l$. Therefore, we must calculate 
$P(H|H_k)$ for $k\geq l$. This is rather a difficult task. Instead, 
we perform an estimation that provides a useful upper bound on $P(H)$. 
Let us denote the number of minimal cuts for $l$-scheme of length $l$ by
\begin{equation}
c(l,n) :=P(H|H_l) {{n l}\choose{l}}.
\end{equation}
Obviously, $c(l,n)=n$ if $l=1$, and for $l \ge 2 $ we have the following 
theorem:
\begin{tw} \label{alpha}
Define:
\begin{equation}
    \alpha(l)=
\frac{2^{-l} \left(\left(1+l+\sqrt{l^2+6 l-7}\right)^l-\left(1+l-\sqrt{l^2+6 l-7}\right)^l \right) }{\sqrt{l^2+6 l-7}}, 
\label{alphalformula}
\end{equation} 
for $l \ge 2$  and $\alpha(1)=1$. Then
\begin{equation} \label{eqn:alpha_estimation}
 \alpha(l) \times(n-2l^2) \leq c(l,n) \leq \alpha(l)\times n.
\end{equation}
\end{tw}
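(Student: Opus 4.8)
The plan is to reduce the quantity $c(l,n)$ to a purely combinatorial count of integer tuples and then evaluate it with a $2\times 2$ transfer matrix whose eigenvalues turn out to be exactly the numbers $r_\pm := \tfrac12\bigl(1+l\pm\sqrt{l^2+6l-7}\bigr)$ appearing in \eqref{alphalformula}. Since the $l$ row-paths are $l$ vertex-disjoint $A$--$B$ paths, Menger's theorem gives minimum cut order $l$, and any cut of size exactly $l$ meets each row in exactly one node; hence a minimal cut is a tuple $(c_1,\dots,c_l)\in\{1,\dots,n\}^l$, where $g_{i,c_i}$ is the blocked node in row $i$. Writing $d=l-1$ for the interlink period, the decisive step is the characterization: $(c_1,\dots,c_l)$ is a cut \emph{iff} for every $i$ no interlink column (multiple of $d$) lies strictly between $c_i$ and $c_{i+1}$, equivalently $c_i,c_{i+1}$ lie in a common block $[md,(m+1)d]$. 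I would prove this directly in both directions. If some multiple $b$ of $d$ satisfies, say, $c_{i+1}<b<c_i$, then the walk $A\to g_{i,1}\to\cdots\to g_{i,b}\to g_{i+1,b}\to\cdots\to g_{i+1,n}\to B$ avoids every blocked node, so the tuple is not a cut. Conversely, if the condition holds for all $i$, then declaring $g_{i,j}$ ``left'' for $j<c_i$ and ``right'' for $j>c_i$ produces a genuine separator: a horizontal edge crosses sides only at the removed node $g_{i,c_i}$, and an interlink edge $g_{i,b}$--$g_{i+1,b}$ crosses only if $b$ lies strictly between $c_i$ and $c_{i+1}$, which is exactly what is excluded.

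Second, I would count these tuples with a transfer matrix. Classify each column as a bar (multiple of $d$) or a non-bar. From a non-bar inside a gap $[md,(m+1)d]$ the next coordinate ranges over $l$ columns ($2$ bars, $l-2$ non-bars); from a bar $md$ it ranges over $[(m-1)d,(m+1)d]$, i.e.\ $2l-1$ columns ($3$ bars, $2l-4$ non-bars). Letting the state vector record how many partial tuples end at a bar and at a non-bar, the transition is governed by
\[
M=\begin{pmatrix} 3 & 2\\ 2(l-2) & l-2\end{pmatrix},\qquad \operatorname{tr}M=l+1,\quad \det M=2-l,
\]
so its characteristic polynomial is $\lambda^2-(l+1)\lambda+(2-l)$ and its eigenvalues are precisely $r_\pm$. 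In the bulk there are $n/d$ bars and $n(d-1)/d$ non-bars, so the total number of valid tuples is $\mathbf{1}^{\!\top}M^{\,l-1}u_1$ with $u_1=\tfrac{n}{l-1}\binom{1}{l-2}$.

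Third, I would identify this total with $\alpha(l)n$. Because $M$ satisfies its own characteristic equation, the scalar $f(k):=\mathbf{1}^{\!\top}M^{k}\binom{1}{l-2}$ obeys $f(k)=(l+1)f(k-1)+(l-2)f(k-2)$; a direct evaluation gives $f(0)=l-1$ and $f(1)=(l-1)(l+1)$, so $h(k):=f(k)/(l-1)$ agrees with the Lucas-type sequence defined by $\beta_0=0$, $\beta_1=1$, $\beta_k=(l+1)\beta_{k-1}+(l-2)\beta_{k-2}$ through $h(k)=\beta_{k+1}$. Since $\beta_l$ is exactly the closed form \eqref{alphalformula}, the bulk total equals $\tfrac{n}{l-1}f(l-1)=\alpha(l)n$. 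The two inequalities then follow from boundary bookkeeping: every cut satisfies $\max_i c_i-\min_i c_i\le (l-1)^2$, so all its coordinates lie in one window of width $<l^2$. The upper bound $c(l,n)\le\alpha(l)n$ holds because restricting coordinates to $\{1,\dots,n\}$ only deletes continuations that the bulk allows (using $\lfloor n/d\rfloor\le n/d$ together with the entrywise domination $M^{k}\binom{1}{0}\ge M^{k}\binom{0}{1}$, which yields $g_{\mathrm B}\ge g_{\mathrm N}$ for the per-start counts). For the lower bound, every cut whose window avoids the two end-strips of width $\le l^2$ is counted exactly as in the bulk, and at most $\alpha(l)\cdot 2l^2$ cuts can be lost near the boundary, giving $c(l,n)\ge\alpha(l)(n-2l^2)$.

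I expect the main obstacle to be the characterization of the first paragraph: getting the ``common block'' condition exactly right (accounting for the virtual bars contributed by $A$ and $B$, and for the fact that a size-$l$ cut must hit every row once) and carrying through both the surviving-path and the separating-wall directions cleanly. The remaining delicacy is the boundary accounting needed to pin down the explicit constant $2l^2$; once the characterization is in place the transfer-matrix evaluation and the identification with $\alpha(l)$ are routine linear algebra.
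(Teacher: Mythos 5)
Your proposal is correct and follows essentially the same route as the paper: the same one-vertex-per-row characterization of minimal cuts (your ``common block'' condition is exactly the paper's production rule for bar/non-bar vertices), the same transfer matrix (yours is the transpose of the paper's matrix in Eq.~\eqref{AnalyticAlpha}, with identical eigenvalues $r_\pm$ and the identification $\alpha(l)=\beta_l$ via diagonalization), and the same treatment of the bounds as boundary corrections to the bulk count $\alpha(l)n$. If anything, your write-up is more explicit than the paper's on the two directions of the cut characterization and on the bookkeeping behind the constant $2l^2$, which the paper only sketches as neglected ``boundary corrections.''
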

The term $\alpha(l)$ gives the asymptotic average number of minimal cuts per column as $n \rightarrow \infty $. For real schemes, it serves as an upper bound.
\begin{proof}
 The vertex cut requires hacking at least one vertex in each row, and each vertex belongs to at least one cut (each column forms a cut), and if we settle on one vertex, we can easily find all cuts containing this vertex.
Let $g_{ij}$ be the settled vertex and 
\begin{equation} \label{cond}
1+l<j<n-l,
\end{equation}
we can formulate a procedure that determines all vertices of the next (or previous) row that can be used to form the cut:
\begin{itemize}
    \item  If degree $g_{ij}>2$, then the possible vertices are $g_{(i+1)(j+r)} ,\; r \in\{-l+1,-l+2,...,l-2,l-1\}$.
    \item  If degree $g_{ij}=2$, then the possible vertices are $g_{(i+1) m} ,\; m\in \{a...b\},\;$ where $a$ and $b$ are determined by conditions: $deg( g_{ia} )>2 , \; deg (g_{ib})>2,\;  b-a=l-1,\;  a < j < b$.
\end{itemize}

So, we can say that each vertex ``produces'' certain new vertices in each row.    
For vertices that do not meet \eqref{cond}, which can be intuitively described as ``boundary vertexes'', the procedure above must be slightly modified, but it will result that the number of ``produced'' vertices is smaller. Then, we calculate $\alpha(l)$ with the following recursion:

\begin{verbatim} 
f(n,dg2): %n-number of iteration, dg2-boolean variable true if degree >2
    if n==l return 1
    if dg2==True return 3*f(n+1,1)+2(l-2)*f(n+1,0) 
    if dg2==False return 2*f(n+1,1)+(l-2)*f(n+1,0)
alpha(l):
    return (f(l,1)+(l-2)*f(l,0))/(l-1)
\end{verbatim}

This can be alternatively rewritten as a pair of related sequences and expressed in the following form with matrix multiplication:
\begin{equation}
\alpha(l)=\frac{1}{l-1}
\begin{pmatrix}
    1 & l-2
   \end{pmatrix}
  \begin{pmatrix}
3 & 2(l-2) \\
2 & (l-2) 
\end{pmatrix}^{l-1}
\begin{pmatrix}
1 \\
1
\end{pmatrix}.
\label{AnalyticAlpha}
\end{equation}
The matrix in this formula can be diagonalized leading to Eq. 
\ref{alphalformula}. The expression in $\alpha(l)$ is overestimated 
because we neglected the ``boundary corrections''. Because the 
ranges of $r$ and $m$ are sometimes smaller (especially close 
to the sides of the graph), the estimation improves as $n$ 
increases. An underestimation is obtained if we do not count all 
vertex cuts that contain ``boundary vertices''.

\end{proof}
We now have analytical estimation for the first non-zero term in
\eqref{eqn:gen-prob}, where we neglect ``boundary corrections''. 
In fact, it can also be calculated numerically in time $O(nl)$ using 
dynamic programming. The algorithm is presented in \nameref{sec:appdynprog}.
Therefore, in the numerical calculation (for hypothetical testing of 
the algorithm), we will use the exact value of $\alpha(n,l)$, 
although for analysis it is convenient to use the upper bound as 
in Theorem \ref{alpha}. The difference is significant when $n$ has 
a similar order as $l$.

Calculating higher terms in \ref{eqn:gen-prob} pose a problem. 
However, it turns out that if $nlp < 1$ it can be approximated 
by geometric series. To do that, we first derive important observation:
\begin{lemat}
\label{beta}
 Consider 
 \begin{equation}
 \beta_{n,l}(k) :={{nl}\choose{k}} P(H|H_k),
 \end{equation}
 which is the number of cuts of size k in the MOPs l-scheme graph. The following inequality holds
 \begin{equation} \label{eqn:beta}
     \beta_{n,l}(k) \leq \beta_{n,l}(k-1)nl  \text{ for }  k-1 \geq l.
 \end{equation} 
 Therefore:
 \begin{equation}
 \label{beta_estimated}
      \beta_{n,l}(k) \leq \alpha(l) n (nl)^{k-l}.
 \end{equation}
 \end{lemat}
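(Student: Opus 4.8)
The plan is to derive the closed-form bound \eqref{beta_estimated} from the recursion \eqref{eqn:beta} by a short induction, so that all of the substance lies in the recursion itself. For the induction, the base case $k=l$ is exactly $\beta_{n,l}(l)=\binom{nl}{l}P(H|H_l)=c(l,n)\le \alpha(l)\,n$, which is the content of Theorem \ref{alpha}. Assuming $\beta_{n,l}(k-1)\le \alpha(l)\,n\,(nl)^{k-1-l}$ and multiplying \eqref{eqn:beta} through by $nl$ gives $\beta_{n,l}(k)\le nl\cdot\alpha(l)\,n\,(nl)^{k-1-l}=\alpha(l)\,n\,(nl)^{k-l}$, closing the induction. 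Hence it suffices to prove \eqref{eqn:beta}.

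To prove $\beta_{n,l}(k)\le nl\,\beta_{n,l}(k-1)$ for $k-1\ge l$, I would count by incidences. Let $\mathcal{C}_m$ denote the family of $m$-element vertex sets whose removal separates $A$ from $B$, so that $|\mathcal{C}_m|=\beta_{n,l}(m)$, and form the set of pairs $\mathcal{P}=\{(T,S): T\subset S,\ |T|=k-1,\ |S|=k,\ T,S\in\mathcal{C}\}$. Counting $\mathcal{P}$ from the $T$-side is immediate: any superset of an $A-B$ cut is again a cut, so each $T\in\mathcal{C}_{k-1}$ lies inside exactly $nl-(k-1)$ sets $S$ of size $k$, and all of these are cuts; therefore $|\mathcal{P}|=(nl-k+1)\,\beta_{n,l}(k-1)\le nl\,\beta_{n,l}(k-1)$.

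Counting $\mathcal{P}$ from the $S$-side requires the key structural input: every cut $S$ with $|S|=k\ge l+1$ contains at least one \emph{redundant} vertex $v$, meaning $S\setminus\{v\}$ is still a cut. Granting this, each $S\in\mathcal{C}_k$ contributes at least one pair, so $\beta_{n,l}(k)\le|\mathcal{P}|$, and combining the two counts yields \eqref{eqn:beta}. The existence of a redundant vertex is equivalent to the assertion that no inclusion-minimal $A-B$ cut has size larger than $l$. I would establish this from the grid geometry, reusing the analysis behind Theorem \ref{alpha}: the $l$ rows are $l$ vertex-disjoint $A-B$ paths, so every cut meets each row and $|S|\ge l$, while each single column is a cut of size exactly $l$; moreover the ``next-row'' procedure of Theorem \ref{alpha} shows that an inclusion-minimal cut is a monotone ``staircase'' with exactly one vertex per row, hence of size exactly $l$. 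Consequently any $S$ with $|S|>l$ is non-minimal, contains a proper sub-cut $S'$, and deleting any $v\in S\setminus S'$ leaves $S\setminus\{v\}\supseteq S'$, which is still a cut.

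The main obstacle is this last structural claim, i.e.\ ruling out inclusion-minimal cuts of size exceeding $l$ once the interlinks are present. The delicate point is that a cut vertex may touch the $A$-side and $B$-side components through a vertical interlink rather than through its horizontal neighbours, so a careful case analysis, paralleling the ``degree $>2$'' versus ``degree $=2$'' split in Theorem \ref{alpha}, is needed to show that two cut vertices cannot both lie in one row of an inclusion-minimal cut. Once this staircase characterisation is secured, the double-counting step and the induction are entirely routine.
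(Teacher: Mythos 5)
Your reduction is sound as far as it goes, and the first thing to say is that there is no proof in the paper to compare against: the authors state explicitly that Lemma \ref{beta} ``is obtained by numerical, not analytical analysis'', that they ``do not possess a proper proof of its validity'', and that it was verified by brute force only for schemes with $nl<30$ (Appendix 4), together with statistical sampling. So your proposal attempts strictly more than the paper delivers. Within your proposal, the induction from \eqref{eqn:beta} to \eqref{beta_estimated} with base case $\beta_{n,l}(l)=c(l,n)\leq\alpha(l)\,n$ (Theorem \ref{alpha}) is correct, and the double count of pairs $(T,S)$ is clean: any superset of a cut is a cut, so each $T\in\mathcal{C}_{k-1}$ lies in exactly $nl-(k-1)$ size-$k$ cuts, and \emph{if} every cut of size $k\geq l+1$ contains a redundant vertex, then indeed $\beta_{n,l}(k)\leq(nl-k+1)\,\beta_{n,l}(k-1)\leq nl\,\beta_{n,l}(k-1)$.

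The genuine gap is exactly where you locate it, but it is larger than your closing sentence concedes. The claim that every inclusion-minimal $A$--$B$ cut of the $l$-scheme has size exactly $l$ does not follow from ``the analysis behind Theorem \ref{alpha}'': the procedure in that proof enumerates cuts of the minimum size $l$ (one vertex per row, within bounded horizontal windows) and says nothing about whether inclusion-minimal cuts of size $>l$ exist. What your argument actually needs is the standard criterion that $S$ is inclusion-minimal if and only if every $v\in S$ has neighbours in both the $A$-side and the $B$-side component of the graph with $S$ removed, followed by a case analysis showing that two cut vertices in one row force some vertex to see only one side; the delicate configurations are ``pockets'' (components of the punctured graph other than the two sides, e.g.\ isolated vertices walled in by the cut) and middle-row vertices at non-interlink columns, whose only neighbours are horizontal. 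The claim is plausible for this grid family --- small-case experiments do not produce counterexamples --- but it is the entire mathematical content of the lemma, not a routine verification. Tellingly, an abandoned proof sketch left commented out in the paper's own source uses your same superset double count for the non-minimal (``not irreducible'') cuts, but then tries to \emph{bound} the number of irreducible cuts of size $k>l$ by an induction on $l$, rather than prove they do not exist; evidently the authors did not regard your structural claim as settled. As it stands, your proposal is a correct conditional reduction of the (unproven) Lemma \ref{beta} to an unproven structural statement --- genuine progress over the paper's purely numerical justification, but not yet a proof.
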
 
 
The Lemma \ref{beta} is obtained by numerical, not analytical 
analysis. We do not posses a proper proof of its validity, but 
for practical use, we only need to confirm its correctness with 
given $n$ and $l$. This could be done numerically with brute 
force, generating all possible vertex sets, and checking if it 
is cut. In fact, due to the computational complexity of this method, 
it can only be performed on small graphs. We performed such 
calculations with all possible schemes with graph size < 30 
\emph{i.e} if $nl <30$. See more in \nameref{sec:betacalculation}.
An alternative method that could substantiate these results is its 
statistical testing \emph{i.e} generating a sample consisting of 
random sets of nodes and calculating the ratio of $A-B$ cuts, 
generated this way versus the sample size. An important observation 
that is worth stressing is that the problem of calculating the 
number of cuts in such a graph (and calculating probability of 
that there exists a path between $A$ and $B$) looks similar to 
the percolation problem. We did not follow this lead, but it 
could be useful if one would try to prove the lemma \ref{beta}.

With use of the above lemma we can now obtain:
\begin{tw} \label{estimate_lscheme_prob}
If $p$ is small enough, so it satisfies $n l p \leq r < 1$, then:
\begin{equation}
    P(H)\leq \frac{1}{1-r} \alpha(l)n  p^l.
\end{equation}
\end{tw}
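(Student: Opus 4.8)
The plan is to bound the full sum \eqref{eqn:gen-prob} term by term using Lemma~\ref{beta} and then recognise the remainder as a geometric series. First I would discard the factors $(1-p)^{nl-k}$, each of which lies in $[0,1]$; since all the remaining factors are nonnegative this only weakens the equality to an inequality:
\begin{equation}
P(H)\leq\sum_{k=0}^{nl}\binom{nl}{k}p^k\,P(H|H_k)=\sum_{k=l}^{nl}\beta_{n,l}(k)\,p^k .
\end{equation}
The lower summation index jumps from $0$ to $l$ because $P(H|H_k)=0$ for $k<l$ (no set of fewer than $l$ nodes can intersect all $l$ rows, hence cannot be an $A$--$B$ cut), so that $\beta_{n,l}(k)=0$ in that range.

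Next I would substitute the estimate \eqref{beta_estimated}, $\beta_{n,l}(k)\leq\alpha(l)\,n\,(nl)^{k-l}$, and factor out the $k$-independent quantities:
\begin{equation}
P(H)\leq\alpha(l)\,n\sum_{k=l}^{nl}(nl)^{k-l}p^k=\alpha(l)\,n\,p^l\sum_{k=l}^{nl}(nlp)^{k-l}.
\end{equation}
After the index shift $m=k-l$ the inner sum becomes the truncated geometric series $\sum_{m=0}^{nl-l}(nlp)^m$ with common ratio $nlp$. Finally, invoking the hypothesis $nlp\leq r<1$, I would bound this truncated sum from above by the convergent series, $\sum_{m=0}^{nl-l}(nlp)^m\leq\sum_{m=0}^{\infty}r^m=\tfrac{1}{1-r}$, which immediately gives $P(H)\leq\tfrac{1}{1-r}\,\alpha(l)\,n\,p^l$, the claimed bound.

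The arithmetic here is entirely routine; all the substance is packed into Lemma~\ref{beta}. Consequently the genuine obstacle is not this theorem at all but the fact that the engine of that lemma — the recursion $\beta_{n,l}(k)\leq\beta_{n,l}(k-1)\,nl$ of \eqref{eqn:beta} — is only verified numerically. A fully rigorous version would have to establish that combinatorial inequality directly, presumably by exhibiting a map sending each $k$-node cut to a $(k-1)$-node cut together with one of the $nl$ candidate nodes and showing that this map over-counts; as the authors indicate via the percolation analogy, proving such an over-counting statement is the real difficulty, and until it is settled the present theorem inherits the conditional status of Lemma~\ref{beta}.
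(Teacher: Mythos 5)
Your proof is correct and follows essentially the same route as the paper: the paper's (very terse) argument likewise rewrites $\binom{nl}{k}p^k P(H|H_k)$ via $\beta_{n,l}(k)\leq \alpha(l)\,n\,(nl)^{k-l}$ from Lemma~\ref{beta}, drops the factor $(1-p)^{nl-k}\leq 1$ in Eq.~\eqref{eqn:gen-prob}, and bounds the result by a geometric series with initial term $\alpha(l)\,n\,p^l$ and common ratio $nlp\leq r<1$. Your closing observation that the theorem inherits the conditional, numerically supported status of Lemma~\ref{beta} also matches the paper's own caveat.
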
 
\begin{proof}
First, from \eqref{beta_estimated} we derive 
\begin{equation}
    P(H|H_k) = \frac{\beta_nl(k)}{{ nl\choose k}} \leq \frac{ \alpha(l)n(nl)^{(k-l)}} {{nl\choose k}}.
\end{equation}
Then \eqref{eqn:gen-prob} can be estimated using a geometric progression 
formula with initial term $\alpha (l) n p^l$ and common ratio $nlp$.
\end{proof}

It will be useful to know how much our algorithm is better than the 
MNOPs. Thus, for a given network, we introduce efficiency coefficient 
$\eta$, which approximate ratio of probability of hacking of our 
algorithm vs MMOP scheme:
\begin{equation}\label{eqn:eta-definition}
     \frac{P(H)}{(np)^{l+1}} \leq \frac{\frac{1}{1-r}\alpha(l)}{p n^{l}} :=\eta.
\end{equation}

As described before in Sec. \ref{sec:protocol}, we used the approximation 
$1-(1-p)^n \approx np$ for the classic scheme, which is not necessarily 
desired as is usually $1-(1-p)^n < np$. We refer to this concern in Sec.  \ref{sec:error}, in which we show that due to this fact we shall introduce 
numerical correction for $\eta$. However, for ``reasonable'' graphs the 
corrected efficiency is in the worst cases only about $1.7$ grater than 
the one predicted by Eq.\ref{eqn:eta-definition}. Therefore, it has a really 
small impact and we omit it.The sufficient condition for the $l$-scheme to 
perform better than the classic scheme is:
\begin{equation}
\label{eqn:working}
\eta  < 1,
\end{equation}

Consequently, the risk of hacking is at least $\eta$ times smaller 
(if both schemes use the same amount of links). Equation \eqref{eqn:working} 
allows us to study the performance of the algorithm and we see that for 
the constant $p$ we can obtain, asymptotically for very large graphs, 
very low values of $\eta$. However, as trust of nodes grows (so $p$ 
becomes smaller), the effectiveness of the algorithm decreases. In the 
same time we need to keep assumed condition $nlp<r<1$. This may leads 
to concerns about upper bounds on $p$. However, for ``realistic networks'' 
we can assume that $n<100, l<10, r<1/2$ ($\frac{1}{1-r}<2$) what, in 
worst case, gives $p<1/2000$, which is not very restrictive. In Sec. 
\ref{sec:numerical_analysis} we present numerical analysis for 
\eqref{eqn:working} which shows exactly in which ranges of $n$ and $p$ our 
algorithm is useful.

\subsection{Analysis of approximation in MNOPs scheme}
\label{sec:error}

Along chapters \ref{sec:attacks} and \ref{sec:multipathscheme} we use 
approximation $1-(1-p)^n \approx np$, which works for $np \ll 1$. Here, 
we analyse validity of this assumption and make correction to equation
\eqref{eqn:working}. First, let us note that just for classic scheme 
we shall demand that $np < 1$. If this does not hold (so $p>\frac{1}{n}$), 
then the probability of hacking single path is:
\begin{equation}
\label{eqn:szacunki}
    P=1-(1-p)^n > 1-\left(1-\frac{1}{n}\right)^n > 1-\frac{1}{e} \approx 0.63,
\end{equation}
which is unacceptable for any real network called ``secure''. 
This can also substantiate reality of assumption made in Theorem \ref{estimate_lscheme_prob} that $nlp < 1$. We define parameter $\mu$
\begin{equation}
\mu :=\frac{1}{n p} \geq 1, 
\end{equation}
which is a certain constant fixed for a network (path), this divide 
all networks models (paths) on certain classes, with fixed value of 
$\mu$. Analysing validity of approximation $1-(1-p)^n \approx np$ as
a function of $\mu$ gives us useful results, namely lower bound for 
probability of hacking. For a given class (characterized by $\mu$) consider:
\begin{equation}
    v_n(\mu)= \frac{1-(1-p)^n - np}{np} = \frac{1-(1-\frac{1}{\mu n})^n - \frac{1}{\mu }}{\frac{1}{\mu }} = \mu \left(1-\left(1-\frac{1}{\mu n}\right)^n\right)-1.
\end{equation}
Clearly $$ P= 1-(1-p)^n= (1+v_n(\mu)) np .$$
We have two following properties:
\begin{equation}
    v_n(\mu) > v_{n+1}(\mu),
\end{equation}  and 
\begin{equation}
  v(\mu):= \lim_{n \to \infty} v_n= -1 + \mu(1-e^{-1/\mu}).
\end{equation}
Let $\gamma(\mu) := 1+v(\mu )= \mu (1-e^{-1/\mu}) <1$.
Then 
\begin{equation} 
    1-(1-p)^n  \geq  \gamma(\mu) np.
\end{equation}
And therefore putting $\gamma(\mu) np$  instead of $np$ in \eqref{eqn:working}, we get more restrictive condition on $\eta$ 
(that refer to situation that classic scheme works better than assumed).
\begin{equation} \label{eqn:working-restrictive}
    \frac{P(H)}{(1-(1-p)^n )^{l+1}} \leq \frac{P(H)}{(np)^{l+1}} \frac{1}{\gamma(\mu)^{l+1}} \leq \eta \frac{1}{\gamma(\mu)^{l+1}} < 1.
\end{equation}
Function $\gamma(\mu)$ is presented on figure \ref{fig:gamma}.
\begin{figure}[h]
    \centering
    \includegraphics[scale=0.6]{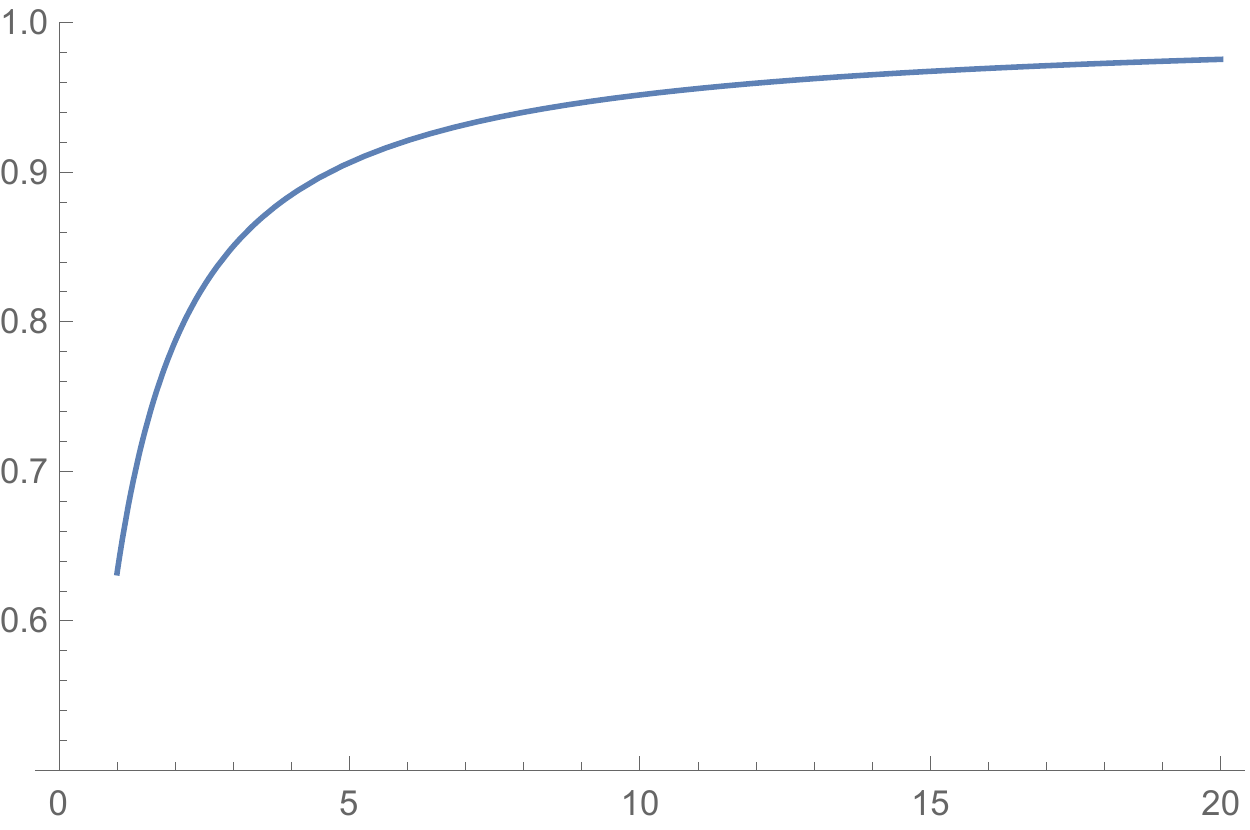}
    \caption{Plot of the function $\gamma(\mu)$.}
    \label{fig:gamma}
\end{figure}
We conclude that typically we shall deal with systems with parameter 
$\mu \ge 10$, this is done by posing arbitrarily but reasonable 
condition ($P<1/10$) and analysis as in equation \eqref{eqn:szacunki}. 
We assume that usually  $l\leq 10$, otherwise communication will be 
very expensive in resources. For such assumption we can estimate maximal
value of correction, namely $\frac{1}{\gamma(10)^{11}} \approx 1.72$.
This section does not change our conclusions significantly but is 
required for completeness of analysis. 

\subsection{Numerical analysis} \label{sec:numerical_analysis}

For given graph and parties communicating with $l$-scheme, parameters 
like $n$ and $l$ can be settled or easily estimated, but trust (equivalently probability of hacking) of intermediate nodes is very unclear to define 
and measure in reality. It can be connected with various events such as
random failure of network , corrupted labours, hackers or it could change 
in time. Therefore, it is rather impossible to declare its value on the 
stage of theoretical considerations. But still we expect that this trust 
(and respectively probability of hacking $p$) must be in reasonable range, 
for example $p\approx 0.1$ or $p\approx {10^{-27}}$ are certainly not, 
of course for security purpose $p$ should be as least as possible. 
Another problem is to find this range but we can analyse conditions imposed 
on $p$ steaming from algorithm structure of $l$-scheme and our considerations,
namely $nlp<r<1$ and $\eta <1$, in dependence of its parameters $n$, $l$, 
$r$. Those two conditions determine possible range of $p$ (first serves 
to establish upper bound and second for lower bound) in which MOPs 
$l$-scheme will work. Determined maximal and minimal values for $p$ 
in dependence for $l$ and $p$ with fixed parameter $r=1/10$ are presented 
in Figs. \ref{fig:minp}, \ref{fig:maxp} and \ref{fig:3D}.

Taking plans concerning the future quantum network spanned 
\emph{e.g.} across Europe, we suppose its size will refer to 
average number of intermediate nodes in one path $n<40$ 
(as expected size of QKD link is about 100 km) and maximal 
number of disjoint path $l< 6$. Therefore, applicability of MOPs 
$l$-scheme is under question and depend on real value of $p$. 

\begin{figure}[H]
    \centering
    \includegraphics[scale=0.7]{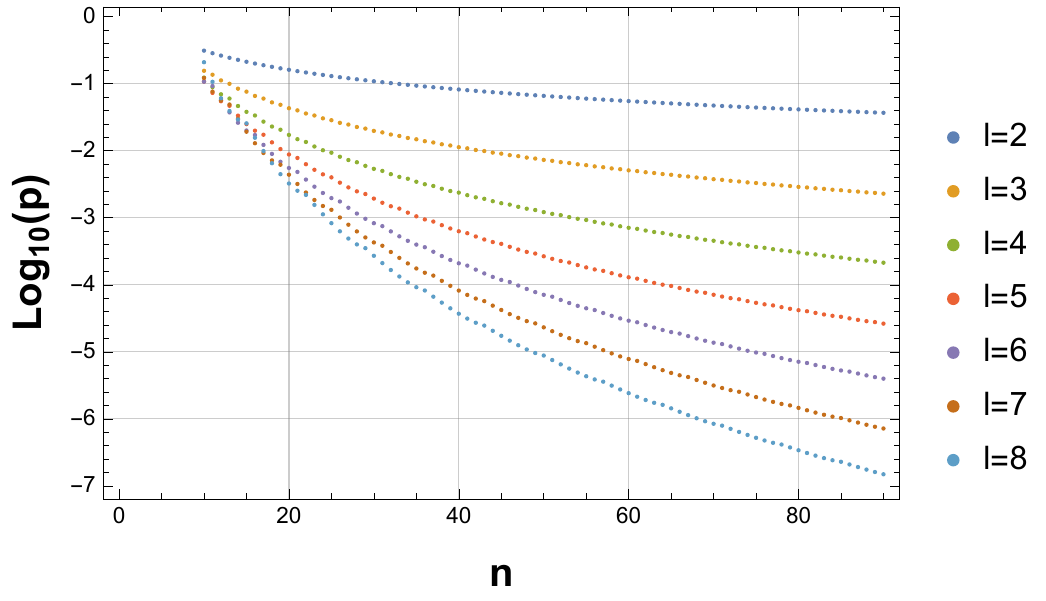}
    \includegraphics[scale=0.7]{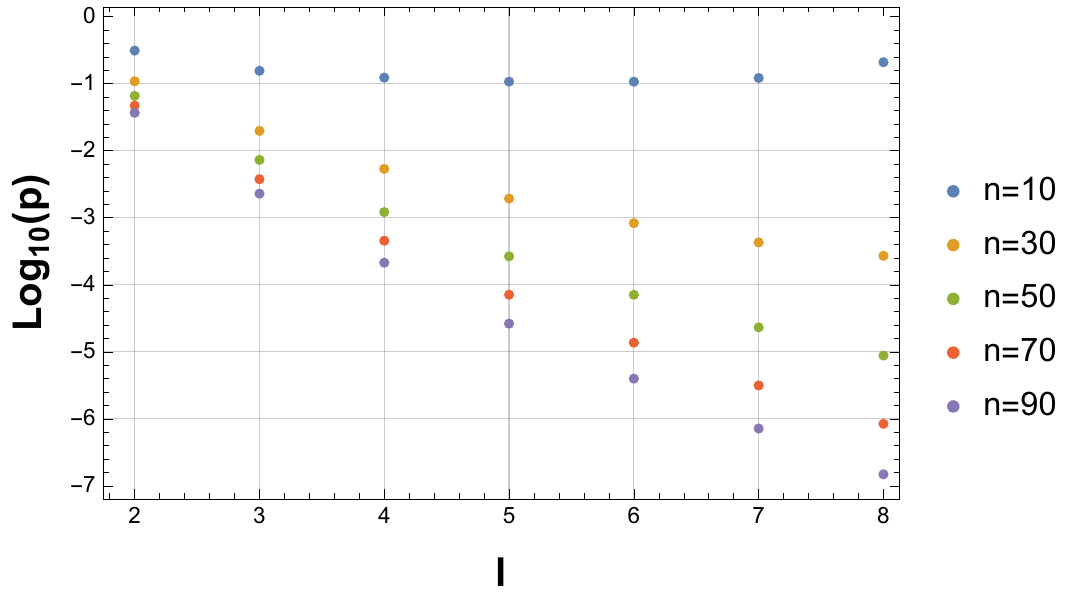}
    \caption{Minimum value of $p$ (expressed in logarithm of base 10) in dependence on $l$ and $n$ and $r=1/10$, satisfying condition $\eta<1$.}
    \label{fig:minp}
\end{figure}

\begin{figure}[H]
    \centering
    \includegraphics[scale=0.7]{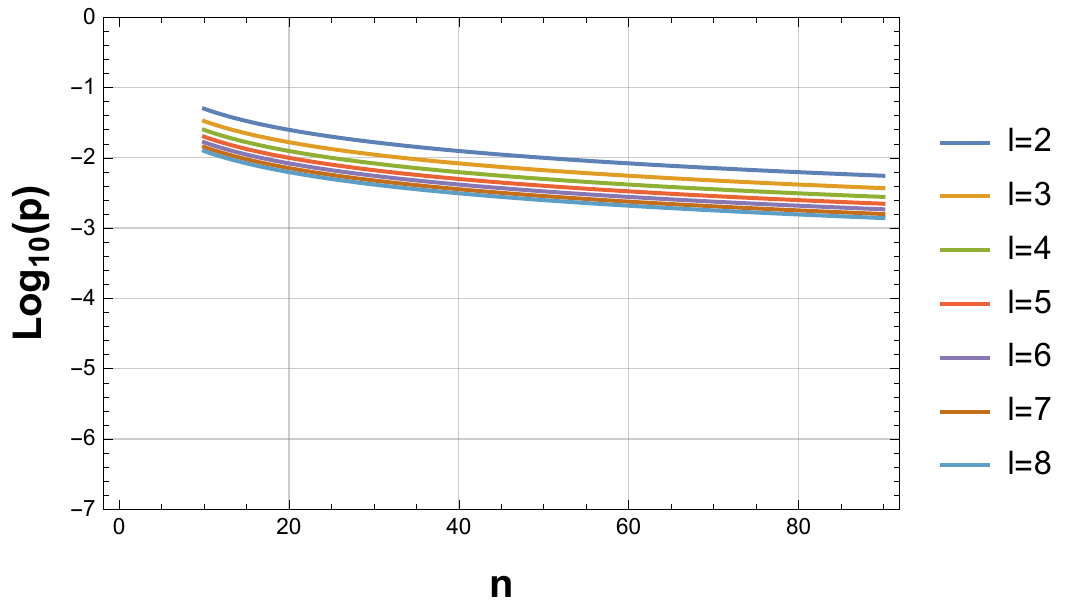}
    \includegraphics[scale=0.7]{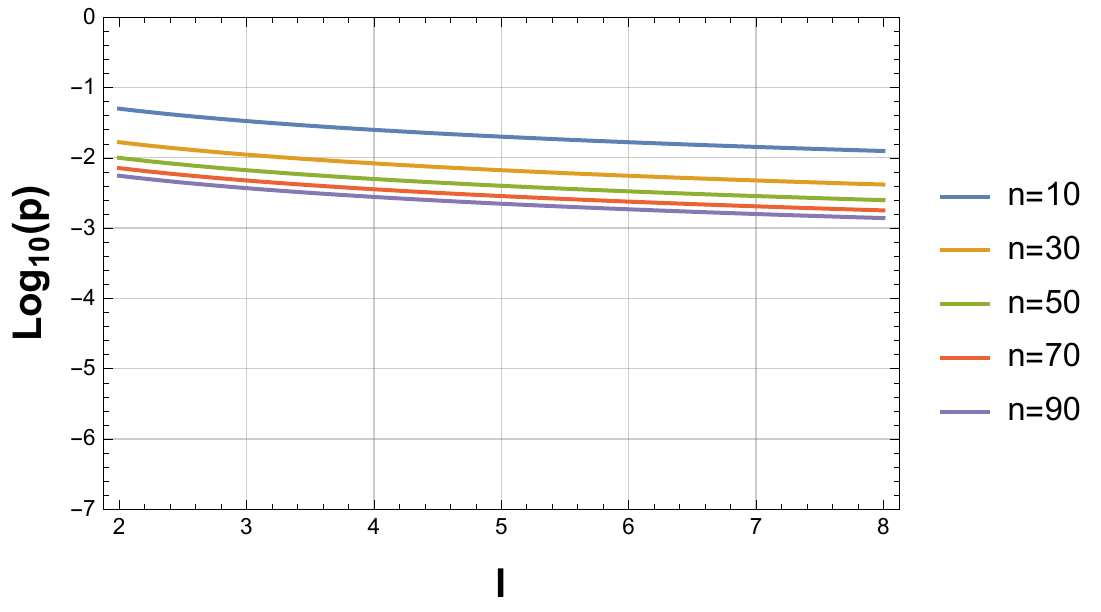}
    \caption{Maximal value of $p$ (expressed in logarithm of base 10) in dependence of $l$ and $n$ and , satisfying condition $nlp < r=1/10$.}
    \label{fig:maxp}
\end{figure}

\begin{figure}[H]
    \centering
    \includegraphics[scale=0.4]{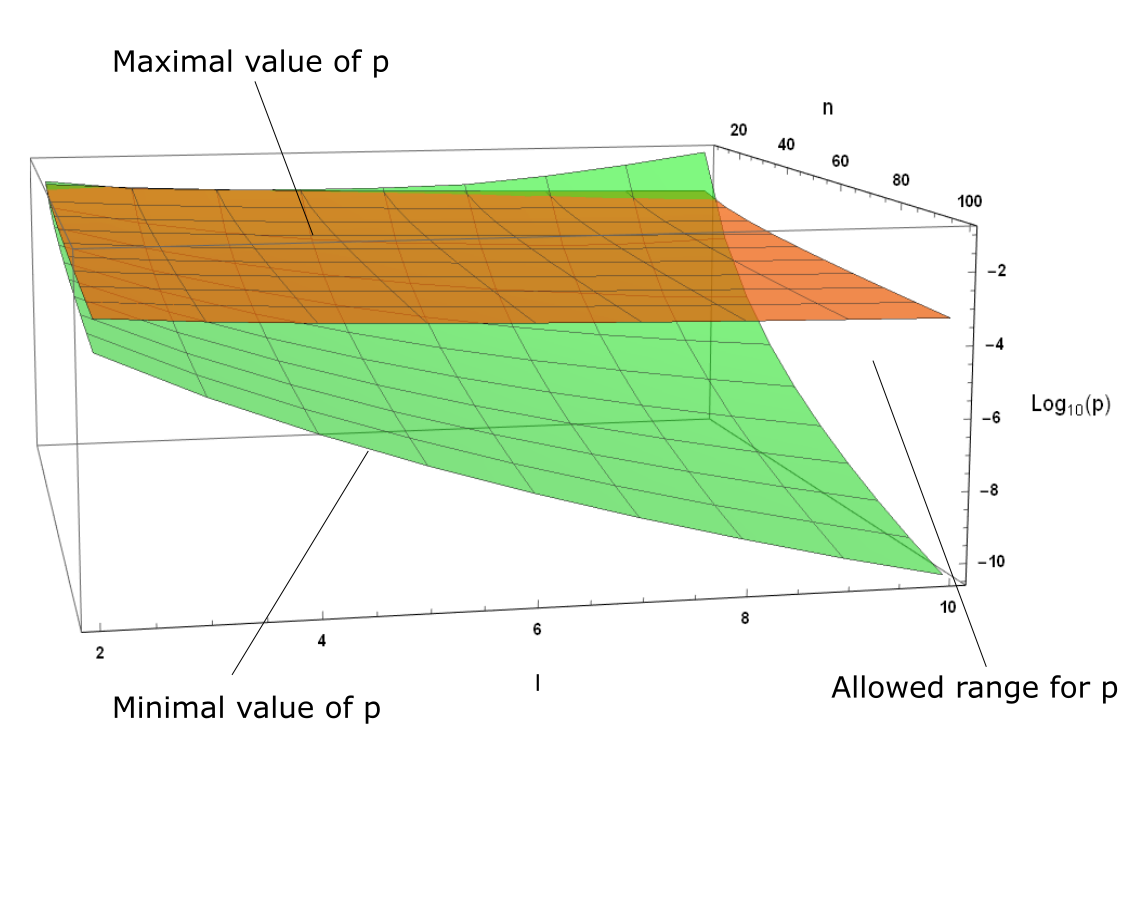}
    \caption{Range of $p$ which fulfill conditions $\eta <1$ and $nlp < 1/10 $, expressed with logarithm of $p$ in dependence of different $l$ and $n$.}
    \label{fig:3D}
\end{figure}

\section{Summary}

In this article, we have analyzed two scenarios of attack that can 
be performed on trusted nodes in the hybrid QKD network. First, 
describing the situation in which each trusted node could be 
compromised with a certain probability $p$ and second describing a 
correlated attack on a network with finite resources. For which 
case the risk of hacking is greater depends on individual parameters 
of network and attacking party, which are difficult to predict in 
reference to the real world. However, we can infer that with growing
network size, the second scenario is less less vulnerable to attacks. 

Next, we have described the scheme of communication in the QKD network
extending the multiple path scheme by the possibility of crossing
communicating paths - the MOPs scheme. This scheme uses the same amount 
of resources (QKD links) and can perform better under certain conditions, 
as analyzed in \ref{sec:numerical_analysis}. The graphic visualization 
of the most restrictive constraint (\emph{i.e.} on minimal value of $p$) 
is presented in Fig. \ref{fig:summary}.
\begin{figure}[H]
    \centering
    \includegraphics[scale=0.4]{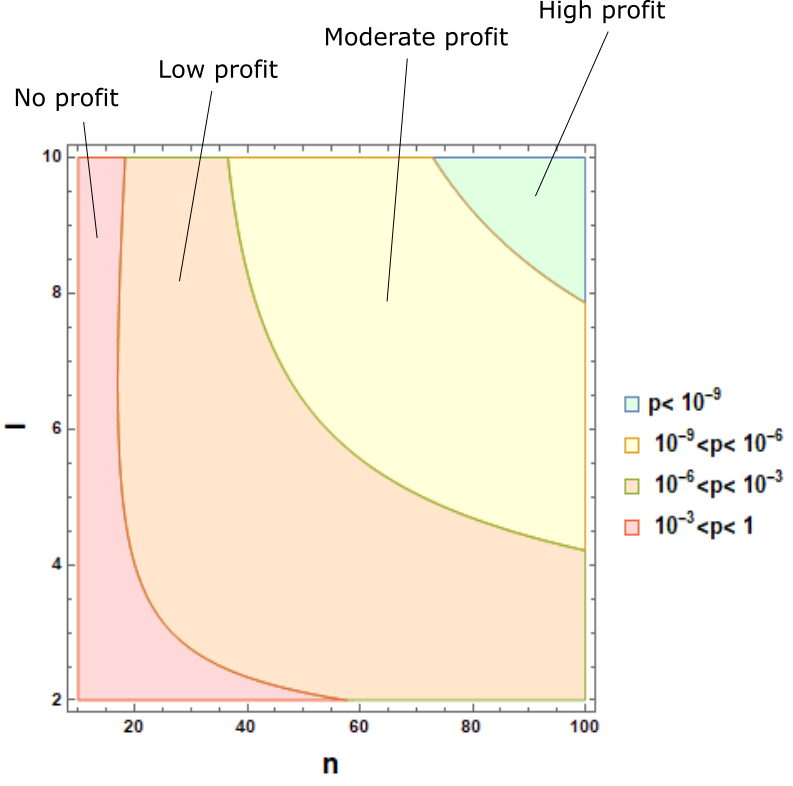}
    \caption{Regions of the allowed value of $p$ as a function of $l$ (number of paths) and $n$  (number of nodes in a path), under the condition $\gamma < 1$. The condition guarantees that the probability of hacking the MOP scheme is smaller than for the MNOP scheme. The colors represent degree of usefulness of the algorithm with green as the most useful and red as not useful.}
    \label{fig:summary}
\end{figure}

Unfortunately, these constraints are possible to fulfill for distant users 
in large-scale networks. 

The concept presented in Sec. \ref{sec:adding_single_link} is worth 
studying. It does not present a groundbreaking idea but introduces 
an interesting and inexpensive improvement to the QKD network. This 
work develops a new way of thinking about QKD multiple path algorithms 
in hybrid networks with trusted nodes. The proposed algorithm is not 
an optimal one. Considering different topology of interlinks could 
perform better but the simple model under investigation enabled 
analytical analysis. For this moment it is difficult to judge whether 
the presented concept will be useful in practice, yet it opens a new 
path for future development of QKD networks.

\section*{\label{sec:app_attack_example}Appendix 1}

\begin{figure}[h]
    \centering
    \includegraphics[scale=0.5]{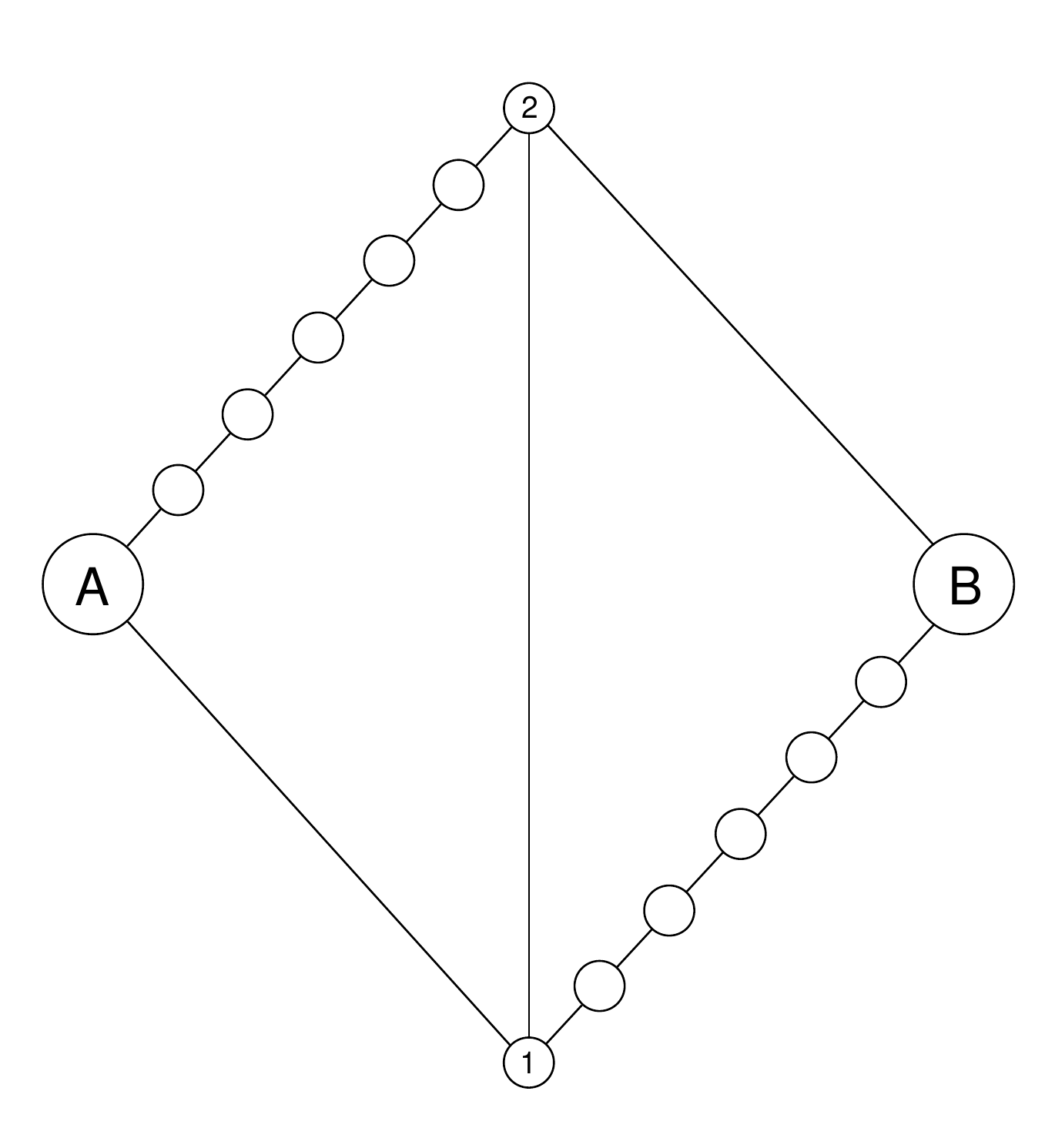}
    \caption{Example of network for which single path protocol is more optimal than multi-path one.}
    \label{fig:example}
\end{figure}

Here, we present an example where for the uncorrelated attack 
(even with uniform probability $p$ for each vertex), it is not 
always optimal to use as many paths as possible. We adopt the 
assumptions made in the discussion of this type of attack in Sec. 
\ref{uncorrelated-attack}. Consider the network presented in Fig. \ref{fig:example}, and let the number of intermediate nodes on path 
$A-2$ and $1-B$ (a path that goes from $A$ to $2$ and $1$ to $B$ 
but omits $1$ and $2$, respectively) be $n$, and the probability 
that any (intermediate) node becomes compromised is $p$. If we 
use one or two paths, then the probability of hacking is, respectively: 
 \begin{align}
&P_1= 2p, \\
&P_2= \left((n+1)p \right)^2 .
\end{align}
It is possible to satisfy the inequality $P_1<P_2$, which leads to 
the condition:
\begin{equation}\label{eqn:ap1}
\frac{2}{(n+1)^2} < p  .
\end{equation}
But for $n$ large enough, we can find $p$ small enough so 
Eq. \ref{eqn:ap1} do not contradict the condition assumed 
before, namely$ np \ll 1$. Therefore, even for the simplified 
version of the correlated attack model, it is not always 
optimal to use as many paths as possible.

\section*{Appendix 2}\label{sec:appNDMS_to_hopbyhop}

In the MNOPs scheme, each intermediate node has a connection 
to two other nodes and passes a secret key from one to another 
in a hop-by-hop fashion using a one-time pad. In the MOPs scheme, 
each node sends a classical message to Bob, which may be not 
efficient (since it generates a lot of traffic in the classic network), 
and we now show a certain alternative that reduces impact on key rate.

\begin{lemat}
\label{lem:alterhop-by-hop}
Alternatively to the hop-by-hop scheme, we can use the following 
procedure: 
In path $A-1-2...n-B$ vertex number 1 take the $XOR$ of keys $K_{A1}$ 
and $K_{12}$ and send it to the next node. If node $i \in{2...n}$ 
receives the message $M_{i-1}$ from the previous node, it takes $XOR$ 
of his shared keys and the message and sends it forward \emph{i.e.} 
$ M_i = M_{i-1} \bigoplus K_{i-1,i} \bigoplus K_{i,i+1} $. The difference 
is that instead of decoding and encoding operations, we do it in single step.
\end{lemat}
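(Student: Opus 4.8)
The plan is to verify that the proposed single-step forwarding procedure is equivalent to the hop-by-hop scheme in two respects: it delivers to Bob exactly the quantity needed to reconstruct the correct shared key, and it leaks no information about that key to Eve on the public (authenticated) channel. Correctness is essentially a telescoping computation, while the security claim reduces to the one-time-pad property; I expect the only genuine subtlety to lie in checking the independence assumptions behind the latter.

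First I would establish correctness by induction on $i$, proving the invariant $M_i = K_{A1} \oplus K_{i,i+1}$ (with the convention $K_{n,n+1}=K_{nB}$). The base case is immediate from $M_1 = K_{A1}\oplus K_{12}$, and the inductive step follows from the update rule,
\begin{equation}
M_i = M_{i-1}\oplus K_{i-1,i}\oplus K_{i,i+1} = \left(K_{A1}\oplus K_{i-1,i}\right)\oplus K_{i-1,i}\oplus K_{i,i+1} = K_{A1}\oplus K_{i,i+1},
\end{equation}
since every internal key cancels against its own copy. Hence the message reaching Bob is $M_n = K_{A1}\oplus K_{nB}$, and since Bob shares $K_{nB}$ with node $n$ he recovers $K_{A1}$, which is exactly Alice's key. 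Moreover, $M_n$ coincides with the XOR of all intermediate nodes' neighbour-key messages in the original direct-to-Bob MOPs scheme (the same telescoping cancellation occurs there), so the forwarding variant produces an identical final shared key.

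Next I would argue security. Under the assumption that no intermediate node is controlled by Eve, each $K_{i,i+1}$ is a fresh QKD key, drawn uniformly and independently of the others and of $K_{A1}$, and used in exactly one transmitted message. Each $M_i = K_{A1}\oplus K_{i,i+1}$ is therefore a perfect one-time-pad encryption of $K_{A1}$, and the joint distribution of $(M_1,\dots,M_n)$ is uniform and independent of $K_{A1}$; consequently the mutual information $I(K_{A1};M_1,\dots,M_n)=0$ and Eve learns nothing about the shared key from the forwarded classical traffic. The main obstacle, such as it is, is precisely this independence bookkeeping: one must confirm that each link key is used only once so that no pad is reused, which is guaranteed because $K_{i,i+1}$ appears only in $M_i$. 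Finally I would remark that the advantage is practical rather than cryptographic: each node performs one combined XOR instead of a separate decode-then-encode, and messages are forwarded locally along the path rather than broadcast to Bob, reducing the traffic concentration of the MOPs scheme while preserving both correctness and security.
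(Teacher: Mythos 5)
Your proof is correct. Note that the paper itself gives no proof of this lemma: Appendix 2 states both lemmas bare and only remarks afterwards that ``one can check'' the claimed properties, so your argument supplies precisely the verification the authors leave implicit. The telescoping invariant $M_i = K_{A1} \oplus K_{i,i+1}$ (with the convention $K_{n,n+1} = K_{nB}$) is the natural formalisation, and it simultaneously establishes consistency with the direct-to-Bob MOPs variant, where the same cancellation occurs inside Bob's XOR of all received messages; identifying the delivered secret as $K_{A1}$ matches the paper's MOPs definition, under which Alice's share is the XOR of the keys she holds with her neighbours. One phrasing caveat in your security bookkeeping: the link key $K_{i,i+1}$ literally enters two transmitted messages --- $M_i$, and $M_{i+1}$ via the update rule --- and it is only after cancellation that each key survives in exactly one closed-form ciphertext. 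Since Eve's view $(M_1,\dots,M_n)$ is determined by those closed forms, your one-time-pad independence conclusion stands, but the justification should appeal to the closed forms rather than to ``each key is used once'' as a property of the protocol's transmissions. It would also be worth making explicit that intermediate node $i$ can still recover $K_{A1} = M_{i-1} \oplus K_{i-1,i}$, exactly as each node learns the forwarded secret in ordinary hop-by-hop; this confirms the lemma changes only the computational step (one combined XOR instead of decode-then-encode) and the traffic pattern, not the trust model --- which is the substance of the equivalence being asserted.
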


\begin{lemat}
Alternatively, to the MOPs method, we can use the following procedure. 
Having a graph, we find system of disjoint paths that cover all vertexes. 
This system will determine the next node for each vertex. While 
communicating, each node uses the procedure described in Lemma \ref{lem:alterhop-by-hop} with difference we additionally $XOR$ all 
interlinks key. Consequently, the message sent forward is: 
$M_i = M_{i-1} \prod_{(j,i)} \bigoplus K_{i,j} $, where $(j,i)$ 
denotes adjointness of the nodes $j$ and $i$.  
\end{lemat}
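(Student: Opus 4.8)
The plan is to verify that the proposed procedure delivers to Bob exactly the shared secret prescribed by the MOPs definition, namely $K=\bigoplus_{i\sim A}K_{Ai}$, while passing messages only locally along the paths of the cover rather than from every node directly to Bob. First I would recall, exactly as in the verification underlying the MOPs definition and in Lemma~\ref{lem:alterhop-by-hop}, that the string both parties are meant to reconstruct is $\bigoplus_{i\sim A}K_{Ai}$ on Alice's side; it therefore suffices to show that Bob, after combining the messages he receives and XOR-ing out his own link keys, obtains this same string.

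Next I would analyse a single path $P$ of the cover by induction on its length, imitating Lemma~\ref{lem:alterhop-by-hop} but now XOR-ing in \emph{all} link keys incident to each node (the two path links together with any interlinks). Writing the update as $M_i=M_{i-1}\oplus\bigoplus_{j\sim i}K_{ij}$, the induction shows that the message $M$ leaving the last node of $P$ equals the XOR of $K_{ij}$ over all links incident to some vertex of $P$. Every link internal to $P$ is incident to two consecutive vertices of $P$, hence enters this XOR twice and cancels, leaving only the links that leave $P$: the link from $A$ to its first node, the link from its last node to $B$, and each interlink joining $P$ to another path of the cover.

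I would then XOR together the messages arriving at Bob from all paths. Because the cover is vertex-disjoint and spans every intermediate node, each interlink has its two endpoints lying on two \emph{different} paths, so its key is contributed exactly twice in this sum and cancels; the links incident to $A$ assemble into $\bigoplus_{i\sim A}K_{Ai}$ and those incident to $B$ into $\bigoplus_{i\sim B}K_{Bi}$. Since Bob knows his own keys $K_{Bi}$, he XORs them out and is left with $\bigoplus_{i\sim A}K_{Ai}$, precisely Alice's key; this establishes correctness, and the accumulation rule obtained is exactly the claimed $M_i=M_{i-1}\oplus\bigoplus_{(j,i)}K_{i,j}$.

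The part that needs care is the parity bookkeeping: the whole argument rests on the global identity $\bigoplus_{i}\bigl(\bigoplus_{j\sim i}K_{ij}\bigr)=\bigoplus_{i\sim A}K_{Ai}\oplus\bigoplus_{i\sim B}K_{Bi}$, with $i$ ranging over intermediate nodes, which holds only because every link between two intermediate nodes is counted from both of its endpoints. Checking that the chosen path cover merely reorganises this single global XOR into locally computable, Bob-directed partial sums — including the boundary start/end nodes of each path and the case of a path not touching $A$, whose message consists only of interlink keys that still cancel pairwise — is the main obstacle; once the cover is fixed so that every path forwards ultimately to Bob, well-definedness of the forwarding is immediate, since each node needs only the incoming message together with its own QKD keys, with no intermediate decryption.
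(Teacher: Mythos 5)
Your proof is correct, and it follows the only route available here: the paper in fact gives no formal proof of this lemma, offering just the remark that ``one can check'' the claim together with the worked example of Fig.~\ref{fig:NDMShop}, whose explicit messages (the shared path keys $K_{12},K_{23},K_{45},K_{56}$ cancelling telescopically along each path, and the interlink keys $K_{14},K_{25},K_{36}$ cancelling between the two messages arriving at Bob) are precisely a $2\times 3$ instance of your general parity argument. Your write-up supplies the bookkeeping the paper leaves implicit --- each internal or interlink key being XORed once from each of its two endpoints, and Bob removing his own keys $K_{Bi}$ to recover $\bigoplus_{i\sim A}K_{Ai}$ --- so it not only matches the paper's approach but completes it.
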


One can check that this extension of the MOPs scheme omits unwanted 
procedure in which each node sends message to Bob and at the same time 
keeps benefits of utilization of interlinks. A useful example is presented 
in Fig. \ref{fig:NDMShop}.
\begin{figure}[H]
   \hspace{-1cm}
    \includegraphics[scale=1]{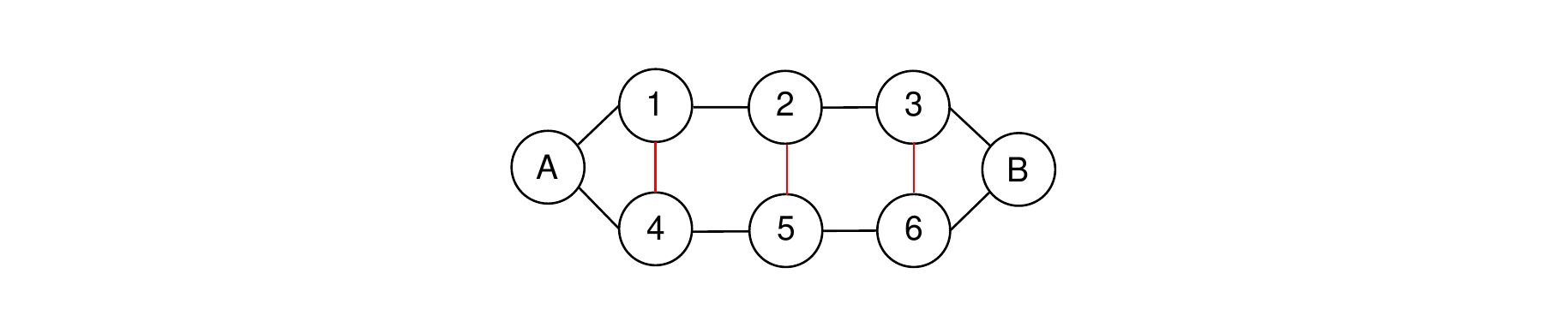}
    \caption{Example of graph for classic and NDMS scheme, red links represent additional interlinks in NDMS scheme. In NDMS following messages will be created:
    $M_{12}=K_{A1} \bigoplus K_{12} \bigoplus K_{14}$,
    $M_{23}=M_{12} \bigoplus K_{23} \bigoplus K_{25} \bigoplus K_{12} $,
    $M_{3B}=M_{23} \bigoplus K_{3B} \bigoplus K_{36} \bigoplus K_{23} $,
    $M_{45}=K_{A4} \bigoplus K_{45} \bigoplus K_{14} $,
    $M_{56}=M_{45} \bigoplus K_{56} \bigoplus K_{25} \bigoplus K_{45} $,
    $M_{6B}=M_{56} \bigoplus K_{6B} \bigoplus K_{36} \bigoplus K_{56}$.
    }
    \label{fig:NDMShop}
\end{figure}

\section*{Appendix 3}
\label{sec:appdynprog}

Here, we present a dynamical programming algorithm to calculate 
the exact value of $c(l,n)$, the greatest discrepancy in the result 
shows up when $n$ is close to $l$:

\begin{verbatim} 
algorithm alpha is
    input: n and l parameters of graph
    output: number of minimal cuts separating A and B

   create two dimensional array tab[l,n]
    for j in 1.. n : 
        tab[1,j]=1 

    for each row i in 2...n:
        for each cell in given row tab[i,j] , j in 1,n :
            calculate range of possible k (a,b) such that nodes(i,j) and (i-1,k) 
            can be in vertex cut for k in a...b :
            tab[i,j]+=tab[i-1,k]

    return sum of tab[l,i] 
\end{verbatim}

\section*{Appendix 4}
\label{sec:betacalculation}

Here, we present the numerical calculation performed to substantiate 
the lemma \ref{beta} from Sec. \ref{subsec:formalconsiderations}. 
For each graph, each subset of size k can be generated and checked 
to see whether it separates $A$ and $B$ (if $A$ and $B$ are in 
disjoint components). The numerical values of the function $\beta_{n,l}(k)$
for small graphs are presented in Fig. \ref{fig:matrices}.
\begin{figure}[H]
\begin{center}
    \centering
    \includegraphics[scale=0.6]{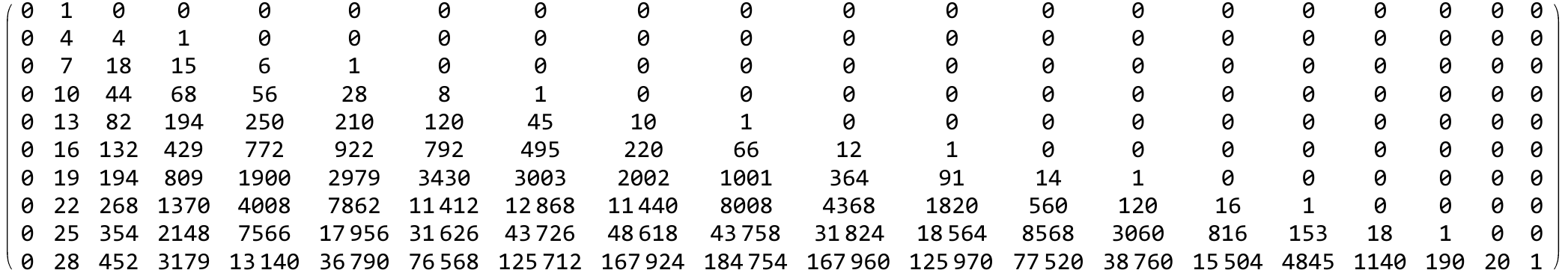}    \vspace{0.3cm}
    
    \includegraphics[scale=0.6]{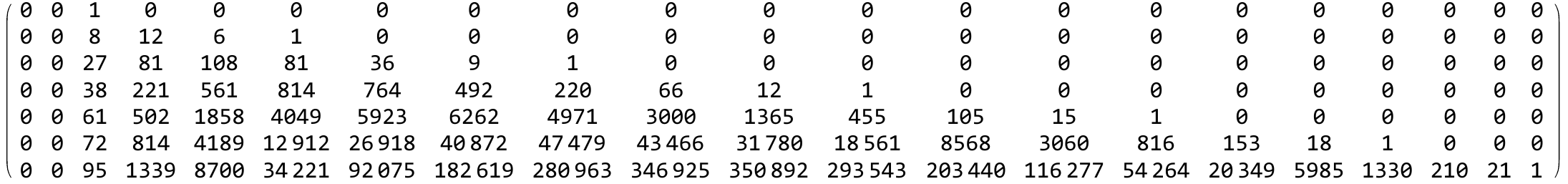}    \vspace{0.3cm}
    
    \includegraphics[scale=0.6]{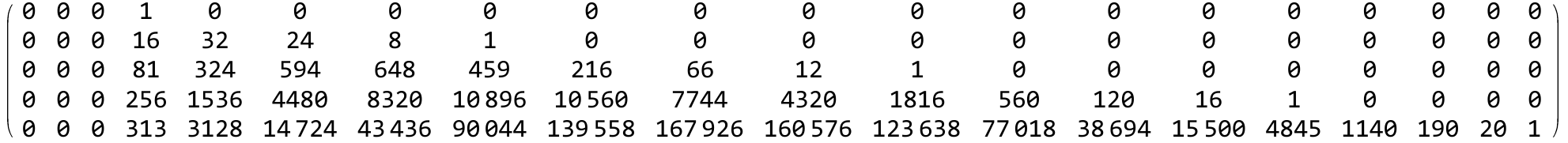}    \vspace{0.3cm}
     
    \includegraphics[scale=0.6]{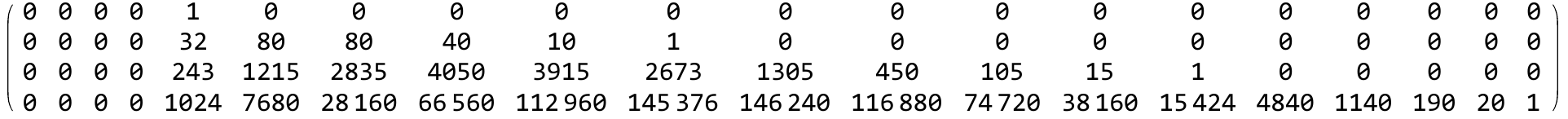}    \vspace{0.3cm}
    
    \caption{Values of function $\beta_{n,l}(k)$. Column number 
    (starting from 1) depict value of $k$ , and row number value 
    of $n$, and $l$ is constant within given matrix respectively 
    $l=2, 3, 4, 5$. Through computation time is exponential in the 
    size of the system (which equals $nl$), calculations were made 
    for graphs of size maximal size about 30.}
    \label{fig:matrices}
\end{center}
\end{figure}

From this analysis, one can deduce that the ratio
$\frac{\beta_{n,l}(k+1)}{\beta_{n,l}(k)}$ is maximal for $k=l$.
We also derive numerical observations suggesting that inequality
$\eqref{eqn:beta}$ ``cannot be strengthened'', by which we mean 
that:
\begin{equation}
 \frac{\frac{\beta_{nl}(l+1)}{\beta_{nl}(l)}}{nl} \rightarrow 1, 
\end{equation}
as $n$ grows. In Fig \ref{fig:limit} the numerical result obtained for 
$l=2$ are shown. For $l>3$, the above thesis seems to hold but collecting 
data for many points is too time consuming and, therefore, we do not 
present more results. 
\begin{figure}[H]
    \centering
    \includegraphics[scale=0.8]{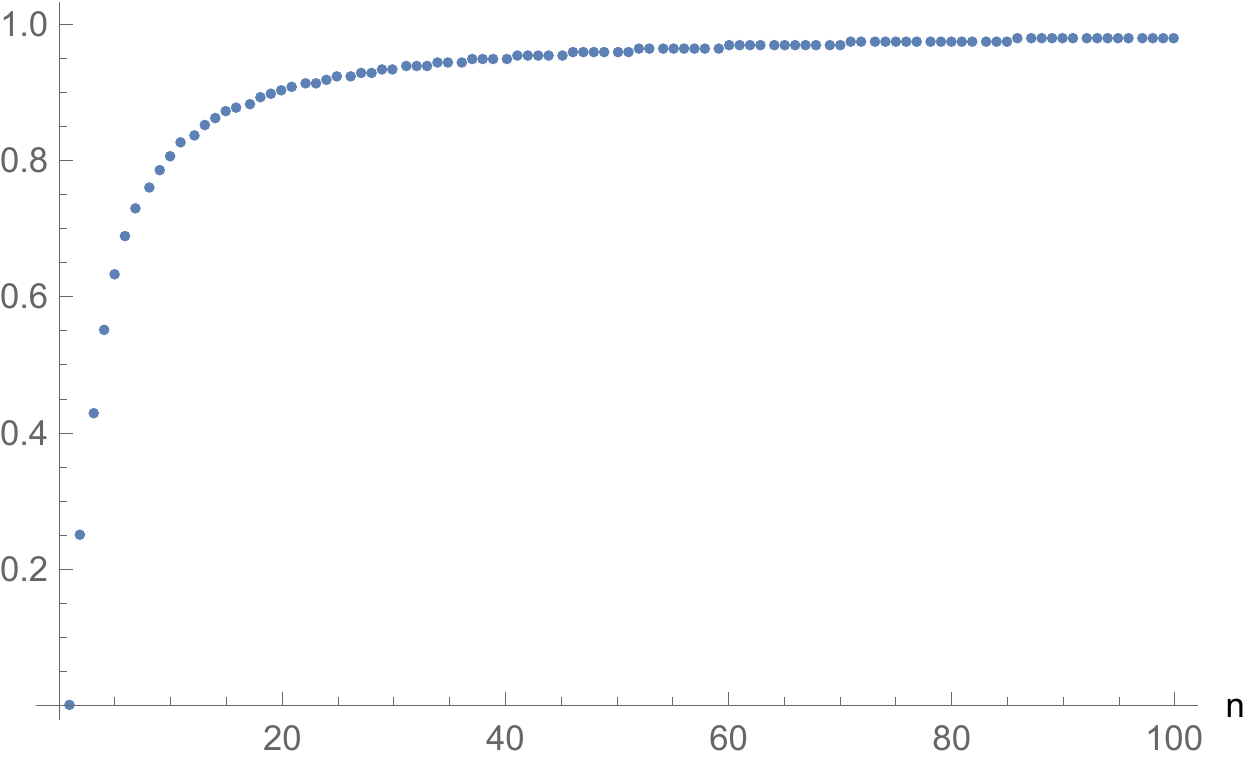}
    \caption{Function $\frac{\frac{\beta_{nl}(l+1)}{\beta_{nl}(l)}}{nl}$ with $l=2$ in dependence of $n$.}
    \label{fig:limit}
\end{figure}

\bibliographystyle{unsrt}
\bibliography{Bibliography.bib}

\begin{thebibliography}{10}

\bibitem{DiffieHellman}
W.~Diffie and M.~Hellman.
\newblock {New directions in cryptography}.
\newblock {\em IEEE Transactions on Information Theory}, 22(6):644--654, 1976.

\bibitem{Shor}
P.~W. Shor.
\newblock {Algorithms for quantum computation: discrete logarithms and
  factoring}.
\newblock In {\em Proceedings 35th Annual Symposium on Foundations of Computer
  Science}, pages 124--134, 1994.

\bibitem{Ekert91}
A.~Ekert.
\newblock {Quantum cryptography based on Bell's theorem}.
\newblock {\em Phys. Rev. Lett.}, 67:661--663, 1991.

\bibitem{range}
R.~Bedington, J.~M. Arrazola, and A.~Ling.
\newblock Progress in satellite quantum key distribution.
\newblock {\em Quantum Information}, 3(1):1--13, 2017.

\bibitem{tokio}
M.~Sasaki et~al.
\newblock {Field test of quantum key distribution in the Tokyo QKD Network}.
\newblock {\em Opt. Express}, 19(11):10387--10409, 2011.

\bibitem{Beijing-Shanghai}
M.~Mehic, M.~Niemiec, S.~Rass, et~al.
\newblock {Quantum key distribution: a networking perspective}.
\newblock {\em ACM Computing Surveys (CSUR)}, 53(5):1--41, 2020.

\bibitem{madrit}
D.~R. Lopez, V.~Martin, V.~Lopez, et~al.
\newblock {Demonstration of software defined network services utilizing quantum
  key distribution fully integrated with Standard telecommunication network}.
\newblock {\em Quantum Reports}, 2(3):453--458, 2020.

\bibitem{autentykacja}
D.~Stebila, M.~Mosca, and N.~L{\"u}tkenhaus.
\newblock {The Case for Quantum Key Distribution}.
\newblock In {\em Quantum Communication and Quantum Networking}, pages
  283--296, Berlin, Heidelberg, 2010. Springer Berlin Heidelberg.

\bibitem{post-quantum-crypto}
D.~J. Bernstein.
\newblock {Introduction to post-quantum cryptography}.
\newblock In {\em Post-quantum cryptography}, pages 1--14. Springer, 2009.

\bibitem{post-quantum-crypto2}
L.~Chen, L.~Chen, S.~Jordan, et~al.
\newblock {\em {Report on post-quantum cryptography}}, volume~12.
\newblock US Department of Commerce, National Institute of Standards and
  Technology, 2016.

\bibitem{multipath}
L.~Salvail, M.~Peev, E.~Diamanti, and R.~Alléaume.
\newblock {Security of Trusted Repeater Quantum Key Distribution Networks}.
\newblock {\em Journal of Computer Security}, 18:61--87, 2010.

\bibitem{Zhou2019SecurityAA}
H.~Zhou, K.~Lv, L.~Huang, and X.~Ma.
\newblock Security assessment and key management in a quantum network.
\newblock {\em ArXiv}, abs/1907.08963, 2019.

\bibitem{podobnypomysl}
T.~R. Beals and B.~C. Sanders.
\newblock {Distributed Relay Protocol for Probabilistic Information-Theoretic
  Security in a Randomly-Compromised Network}.
\newblock In Reihaneh Safavi-Naini, editor, {\em Information Theoretic
  Security}, pages 29--39, Berlin, Heidelberg, 2008. Springer Berlin
  Heidelberg.

\bibitem{securityrosjanie}
A.~Gaidash, G.~Miroshnichenko, and A.~Kozubov.
\newblock {Quantum network security dependent on connection density between
  trusted nodes}, 2022.

\bibitem{solomons2021scalable}
N.~R. Solomons, A.~I. Fletcher, D.~Aktas, et~al.
\newblock {Scalable authentication and optimal flooding in a quantum network},
  2021.

\bibitem{Elliott_2002}
E.~Chip.
\newblock {Building the quantum network}.
\newblock {\em New Journal of Physics}, 4:46--46, 2002.

\bibitem{disjoint_paths}
J.~W. Suurballe.
\newblock {Disjoint paths in a network}.
\newblock {\em Networks}, 4(2):125--145, 1974.

\bibitem{EILAMTZOREFF1998113}
T.~Eilam-Tzoreff.
\newblock {The disjoint shortest paths problem}.
\newblock 85(2):113--138, 1998.

\bibitem{k-DPP_KARP}
R.~M. Karp.
\newblock {On the Computational Complexity of Combinatorial Problems}.
\newblock {\em Networks}, 5(1):45--68, 1975.

\bibitem{lochet2020polynomial}
W.~Lochet.
\newblock A polynomial time algorithm for the $k$-disjoint shortest paths
  problem, 2020.

\bibitem{KAWARABAYASHI2012424}
K.~Kawarabayashi, Y.~Kobayashi, and B.~Reed.
\newblock {The disjoint paths problem in quadratic time}.
\newblock {\em Journal of Combinatorial Theory, Series B}, 102(2):424--435,
  2012.

\end{thebibliography}


\end{document}